\documentclass[11pt,reqno]{article}
\usepackage[utf8]{inputenc}
\usepackage{amsmath,amssymb,amsthm}
\usepackage{geometry}
\usepackage{hyperref}
\usepackage{booktabs}
\usepackage{array}
\usepackage{cite}
\usepackage{tikz}
\usepackage{xcolor}
\usetikzlibrary{shapes.geometric, shadows, positioning, backgrounds, calc}

\newtheorem{theorem}{Theorem}[section]
\newtheorem{lemma}[theorem]{Lemma}
\newtheorem{proposition}[theorem]{Proposition}

\theoremstyle{definition}

\newtheorem{example}[theorem]{Example}
\newtheorem{remark}[theorem]{Remark}

\numberwithin{equation}{section}

\title{\textbf{Spectral Characterization and Network Systems Dynamics of Zero-Divisor Topologies over $\mathbb{F}_p[x]/\langle x^4\rangle$: Consensus, Reliability, and Transport}}

\author{
  \textbf{Apurba Sarkar}$^{1}$, \textbf{Kalyan Hansda}$^{1}$, and \textbf{Makhan Maji}$^{2}$\\[1ex]
  $^{1}$Department of Mathematics, Visva-Bharati, Santiniketan -- 731235, West Bengal, India.\\
  $^{2}$Indian Institute of Technology Madras, Chennai -- 600036, Tamil Nadu, India.\\
  \texttt{apurbasarkar065@gmail.com, kalyanh4@gmail.com, makhan2maths@gmail.com}
}
\date{}

\begin{document}
\maketitle

\begin{abstract}
For an odd prime $p$,  let $\Gamma(R)$ denote the zero-divisor graph associated with the finite local ring $R= \mathbb{F}_p[x]/\langle x^4 \rangle$, having $p^3 - 1$ vertices and size $|E(\Gamma(R))| = \frac{1}{2}(3p^4 - 4p^3 - p^2 + 2)$.   Using an equitable 4-cell partition induced by the ideal filtration $\langle u^3 \rangle \subset \langle u^2 \rangle \subset \langle u \rangle$, we prove that $\operatorname{Spec}_A(\Gamma(R))$ decomposes into discrete levels $\lambda \in \{0, -1\}$ with multiplicities $p^3 - p^2 - 1$ and $p^2 - 3$, respectively, alongside three irrational roots of an irreducible cubic polynomial $P_3(\lambda) = 0$, yielding adjacency energy $\epsilon(\Gamma(R)) = 4p^2 + \mathcal{O}(p)$. In contrast, the Laplacian spectrum is entirely integral: $\operatorname{Spec}_L(\Gamma(R)) = \{0^1, (p - 1)^{p^3 - p^2}, (p^2 - 1)^{p^2 - p - 1}, (p^3 - 1)^{p - 1}\}$. Consequently, the tree entropy satisfies $z(\Gamma(R)) = \ln p - \frac{1}{p} + \frac{\ln p}{p} + \mathcal{O}(\frac{\ln p}{p^2})$, with small-failure unreliability bounded by $p^2(p - 1)(1 - q)^{p-1}$. In linear consensus $\dot{x} = -Lx$, error trajectories decouple into three modal timescales $\{ (p^3 - 1)^{-1}, (p^2 - 1)^{-1}, (p - 1)^{-1} \}$ with algebraic connectivity $\alpha(\Gamma(R)) = p - 1$ and steady-state $H_2$ error variance $H_2^2 = \frac{Kf(\Gamma(R))}{2n^2} = \frac{1}{2p} + \mathcal{O}(p^{-3})$. Furthermore, the Kirchhoff index yields a vanishing average resistance distance $\overline{r}(\Gamma(R)) = \frac{2}{p} + \mathcal{O}(p^{-3}) \to 0$, while unbiased Markovian random walks induce an asymptotic traffic concentration vector $\Pi = (\frac{1}{3}, 0, \frac{1}{3}, \frac{1}{3})$ with quadratic per-node core routing stress $\pi_{\mathrm{core}}/\pi_{\mathrm{periph}} \sim p^2 + p + 1$.
\end{abstract}

\noindent \textbf{Keywords:} Zero-divisor graph; Finite local ring; Laplacian spectrum; Consensus dynamics; Kirchhoff index; Network reliability; Effective resistance.

\noindent \textbf{MSC 2020:} 05C50, 13M05, 90B18, 93A14, 94C15.

\section{Introduction}\label{Introduction}

The algebraic study of zero-divisor graphs provides an exact framework for constructing highly symmetric discrete architectures from commutative rings \cite{AndersonLivingston1999, AndersonFrazier2019, Beck1988}. For a commutative ring with unity $R$, let $Z^*(R)$ denote its set of non-zero zero-divisors. Following the canonical formulation by Anderson and Livingston \cite{AndersonLivingston1999}, the zero-divisor graph $\Gamma(R)$ has vertex set $V(\Gamma(R)) = Z^*(R)$, where two distinct vertices $x, y \in Z^*(R)$ are adjacent ($x \sim y$) if and only if $xy = 0$. Structural classifications of $\Gamma(R)$ have demonstrated fundamental interactions between annihilating ideal chains and combinatorial invariants \cite{Annamalai2025, Mukhtar2020, Reddy2020, Redmond2002}.

Concurrently, modern distributed systems engineering and network dynamics rely on spectral graph theory to evaluate multi-agent robotic fleets, continuous-time consensus protocols, and diffusive packet transport \cite{Balakrishnan2012, Bapat2014, Chung1997, OlfatiSaberMurray2004}. The operational efficacy of an interconnection topology is fundamentally characterized by three interdependent physical properties: structural reliability and spanning tree complexity to preserve connectivity under edge failures \cite{Bapat2014}; distributed coordination timescales and algebraic connectivity governing decentralized state agreement \cite{Balakrishnan2012, OlfatiSaberMurray2004}; and multi-path diffusive transport captured through resistance distances \cite{KleinRandic1993}, the Kirchhoff index \cite{XiaoGutman2003}, and topological invariants \cite{Gutman1972, Randic1975}.

Unlike finite local chain rings of lower nilpotency index such as $\mathbb{F}_p + u\mathbb{F}_p + u^2\mathbb{F}_p$ ($u^3 = 0$) \cite{Annamalai2025}, where the zero-divisor graph exhibits rational adjacency spectra, setting $u^4 = 0$ triggers a structural transition. For $R = \mathbb{F}_p[x]/\langle x^4\rangle \cong \mathbb{F}_p + u\mathbb{F}_p + u^2\mathbb{F}_p + u^3\mathbb{F}_p$, the intermediate ideal chain $\langle u^3 \rangle \subset \langle u^2 \rangle \subset \langle u \rangle$ induces an equitable partition whose adjacency characteristic polynomial generates an irreducible cubic factor $P_3(\lambda) \in \mathbb{Q}[\lambda]$ with Galois group $S_3$. On $n = p^3 - 1$ vertices with edge size $|E(\Gamma(R))| = \frac{1}{2}(3p^4 - 4p^3 - p^2 + 2)$, this non-trivial annihilation profile governs well-separated Laplacian modes, establishing an explicit framework to evaluate consensus rates, resistance metrics, and random walk transport in distributed network topologies.


\textbf{In this paper, our primary aim is to establish the complete spectral characterization of the zero-divisor graph $\Gamma(R)$ on $ p^3 - 1$ vertices and evaluate its direct applications across network consensus, structural reliability, and diffusive transport.} In Section~\ref{Sec3}, we establish exact closed-form expressions for primary degree- and distance-based topological descriptors of $\Gamma(R)$ and determine its complete algebraic eigenspaces; specifically, using an equitable four-cell partition of $Z^*(R)$, we resolve the full spectrum and multiplicities of the adjacency matrix $A(\Gamma(R))$ and the Laplacian matrix $L(\Gamma(R))$, showing that the adjacency spectrum reduces to an explicit cubic factor alongside discrete levels, whereas the Laplacian spectrum is entirely integral. 

In Section~\ref{Sec4}, we explore network systems applications across multiple domains. In Section~\ref{Sec4_sub1}, we evaluate the exact spanning tree complexity $\tau(\Gamma(R))$ via the Matrix-Tree Theorem, establish the asymptotic scaling of the tree entropy $z(\Gamma(R))$, and show that the all-terminal unreliability under small failure probabilities is strictly governed by the peripheral vertex tier. Furthermore, using the spectral formulation of Xiao and Gutman \cite{XiaoGutman2003}, we evaluate the Kirchhoff index $Kf(\Gamma(R))$ and prove that the average two-point resistance distance satisfies $\overline{r}(\Gamma(R)) \sim \frac{2}{p} \to 0$ as $p \to \infty$, demonstrating that the algebraic core acts as an ultra-low-impedance crossbar. In Section~\ref{Sec4_sub2}, we analyze continuous-time linear consensus dynamics \cite{OlfatiSaberMurray2004}, revealing modal decoupling into three distinct relaxation timescales, and compute the steady-state error variance under additive Wiener noise via the first-order network $H_2$-norm. We then examine Markovian packet routing over $\Gamma(R)$, establishing that stationary traffic mass asymptotically splits equally across three primary partitions and induces a quadratic routing stress ratio $\Theta(p^2)$ on the core nodes.

\section{Preliminaries}\label{Sec2_Preliminaries}

Throughout this paper, let $p$ be an odd prime. We denote by $R = \mathbb{F}_p[x]/\langle x^4 \rangle \cong \mathbb{F}_p + u\mathbb{F}_p + u^2\mathbb{F}_p + u^3\mathbb{F}_p$ ($u^4 = 0$) the finite local commutative chain ring of characteristic $p$ with order $|R| = p^4$. Every element $a \in R$ is uniquely represented as $a = a_0 + a_1 u + a_2 u^2 + a_3 u^3$ with $a_i \in \mathbb{F}_p$. The unique maximal ideal of $R$ is $\mathfrak{m} = \langle u \rangle = uR$, and the set of non-zero zero-divisors is $Z^*(R) = \mathfrak{m} \setminus \{0\}$, having cardinality $|Z^*(R)| = p^3 - 1$. Following Anderson and Livingston \cite{AndersonLivingston1999}, the zero-divisor graph $\Gamma(R)$ has vertex set $V(\Gamma(R)) = Z^*(R)$, where distinct vertices $x, y$ are adjacent ($x \sim y$) if and only if $xy = 0$.

\subsection{Structural Classification of $\Gamma(R)$}

Because annihilation in $R$ is governed by the nilpotency index $u^4 = 0$, the principal annihilator ideals form the strict filtration: $\mathrm{Ann}(u) = \langle u^3 \rangle = u^3 R, \quad \mathrm{Ann}(u^2) = \langle u^2 \rangle = u^2 R, \quad \mathrm{Ann}(u^3) = \langle u \rangle = u R = \mathfrak{m}$. This induces a natural partition of $V(\Gamma(R))$ into three degree-invariant classes $V(\Gamma(R)) = \mathcal{D}_1 \cup \mathcal{D}_2 \cup \mathcal{D}_3$:
\begin{enumerate}
    \item $\mathcal{D}_1 = uR \setminus u^2R = \{x \in Z^*(R) : \mathrm{Ann}(x) = u^3R\}$, with cardinality $N_1 = p^2(p - 1)$ and uniform degree $d_1 = |\mathcal{D}_3| = p - 1$. The induced subgraph $\Gamma[\mathcal{D}_1]$ is an independent set.
    \item $\mathcal{D}_2 = u^2R \setminus u^3R = \{x \in Z^*(R) : \mathrm{Ann}(x) = u^2R\}$, with cardinality $N_2 = p(p - 1)$ and uniform degree $d_2 = (N_2 - 1) + N_3 = p^2 - 2$. The induced subgraph $\Gamma[\mathcal{D}_2]$ is a complete graph $K_{p(p-1)}$.
    \item $\mathcal{D}_3 = u^3R \setminus \{0\} = \{x \in Z^*(R) : \mathrm{Ann}(x) = uR\}$, with cardinality $N_3 = p - 1$ and uniform degree $d_3 = |V(\Gamma(R))| - 1 = p^3 - 2$. The induced subgraph $\Gamma[\mathcal{D}_3]$ is a complete graph $K_{p-1}$ whose vertices are adjacent to every other vertex in $\Gamma(R)$.
\end{enumerate}

The edge set of $\Gamma(R)$ naturally decomposes into four mutually disjoint classes determined by the degrees of their endpoints across the tiers $\mathcal{D}_1, \mathcal{D}_2, \mathcal{D}_3$:
\begin{itemize}
    \item []$E_1 = E(\mathcal{D}_1, \mathcal{D}_3)$ with cardinality $|E_1| = |\mathcal{D}_1||\mathcal{D}_3| = p^2(p-1)^2$;
    \item []$E_2 = E(\Gamma[\mathcal{D}_2])$ with cardinality $|E_2| = \binom{|\mathcal{D}_2|}{2} = \frac{1}{2}p(p-1)(p^2 - p - 1)$;
    \item []$E_3 = E(\mathcal{D}_2, \mathcal{D}_3)$ with cardinality $|E_3| = |\mathcal{D}_2||\mathcal{D}_3| = p(p-1)^2$;
    \item []$E_4 = E(\Gamma[\mathcal{D}_3])$ with cardinality $|E_4| = \binom{|\mathcal{D}_3|}{2} = \frac{1}{2}(p-1)(p-2)$.
\end{itemize}
Consequently, the total edge size is $|E(\Gamma(R))| = |E_1| + |E_2| + |E_3| + |E_4| = \frac{1}{2}(3p^4 - 4p^3 - p^2 + 2)$, and the average vertex degree is $\overline{d} = \frac{2|E(\Gamma(R))|}{|V(\Gamma(R))|}$.

\begin{remark}
\label{rem:graph_family_comparison}
To situate $\Gamma(R)$ within structural graph theory, we compare it against standard graph classes:
\begin{enumerate}
    \item  A graph is \textit{complete multipartite} \cite{GodsilRoyle2001} if and only if non-adjacency is an equivalence relation. In $\Gamma(R)$, selecting $v_1 \in \mathcal{D}_1$, $v_2 \in \mathcal{D}_2$, and $w_1 \in \mathcal{D}_1 \setminus \{v_1\}$, we observe $v_1 \not\sim w_1$ and $v_1 \not\sim v_2$, yet $v_2 \sim w_2$ for any $w_2 \in \mathcal{D}_2 \setminus \{v_2\}$. Thus non-adjacency is non-transitive. Moreover, the block $\mathcal{D}_2$ contains internal edges ($d_{\mathcal{D}_2}(v) = p(p - 1) - 1 > 0$), violating the independent set property required of partite sets.
    \item  A \textit{strongly regular} \cite{GodsilRoyle2001} graph must be regular. However, $\Gamma(R)$ exhibits three distinct vertex degrees: $d(v) \in \{p - 1,\, p^2 - 2,\, p^3 - 2\}$, precluding regularity.
    \item  Because \textit{Cayley graphs} \cite{GodsilRoyle2001} are vertex-transitive, all vertices share identical local neighborhoods. The strict degree hierarchy $d(\mathcal{D}_1) < d(\mathcal{D}_2) < d(\mathcal{D}_3)$ rules out vertex-transitivity and Cayley representation.
    \item Consequently, $\Gamma(R)$ is a \textit{multi-tiered core-periphery graph} generated by the nilpotency chain $\langle u^3 \rangle \subset \langle u^2 \rangle \subset \langle u \rangle$. The set $\mathcal{D}_3$ functions as an all-dominating clique core, $\mathcal{D}_1$ forms an independent peripheral shell whose connections terminate entirely in $\mathcal{D}_3$, and $\mathcal{D}_2$ acts as a dense, interconnecting intermediate layer.
\end{enumerate}
\end{remark}

\subsection{Equitable Partitions and Quotient Matrices}

To analyze the spectra of structured networks, we utilize equitable partitions. A partition $\mathcal{P} = \{V_1, \dots, V_k\}$ of $V(\Gamma)$ is equitable if every vertex in $V_i$ has exactly $q_{ij} = |N(v) \cap V_j|$ neighbors in $V_j$ for all $1 \le i, j \le k$. 

Let $S \in \{0, 1\}^{n \times k}$ denote the unnormalized block indicator matrix of $\mathcal{P}$, satisfying $S^T S = D_\mathcal{P} = \operatorname{diag}(|V_1|, \dots, |V_k|)$. In algebraic graph theory \cite{Bapat2014, BrouwerHaemers2012, GodsilRoyle2001}, the standard equitable quotient matrix $Q_A = (q_{ij})_{k \times k}$ satisfies the intertwining relation $A(\Gamma) S = S Q_A$.
The characteristic polynomial of $Q_A$ divides that of $A(\Gamma)$. The eigenvalues of $Q_A$ correspond to eigenvectors of $A(\Gamma)$ that are constant on each partition cell $V_i$, while the remaining $|V| - k$ eigenvalues correspond to eigenvectors orthogonal to the all-ones vector on each cell \cite{Balakrishnan2012, Bapat2014}. 

Although $Q_A$ is non-symmetric whenever $|V_i| \neq |V_j|$, it satisfies the combinatorial balance condition $|V_i| q_{ij} = |V_j| q_{ji} = e(V_i, V_j)$. Consequently, $Q_A$ is diagonally similar via $D_\mathcal{P}^{1/2} Q_A D_\mathcal{P}^{-1/2}$ to the real symmetric matrix $S_A = D_\mathcal{P}^{-1/2} S^T A(\Gamma) S D_\mathcal{P}^{-1/2}$, guaranteeing that all eigenvalues of $Q_A$ are strictly real \cite{GodsilRoyle2001}.

\subsection{Algebraic Graph Spectra, Electrical Resistance, and Consensus Dynamics}

Let $\Gamma = (V, E)$ be a connected, undirected graph of order $n$ with adjacency matrix $A(\Gamma)$ and diagonal degree matrix $D(\Gamma)$. The combinatorial Laplacian matrix is $L(\Gamma) = D(\Gamma) - A(\Gamma)$ \cite{Bapat2014, Chung1997}. Its eigenvalues are ordered as $0 = \mu_0 < \mu_1 \le \dots \le \mu_{n-1}$, where $\mu_1 = \lambda_2(L(\Gamma)) = \alpha(\Gamma)$ denotes the algebraic connectivity governing the convergence rate of continuous-time consensus protocols $\dot{x}(t) = -L(\Gamma)x(t)$ \cite{OlfatiSaberMurray2004}. By the Matrix-Tree Theorem, the number of spanning trees is $\tau(\Gamma) = \frac{1}{n}\prod_{i=1}^{n-1}\mu_i$, with asymptotic tree entropy $z(\Gamma) = \frac{1}{n}\ln \tau(\Gamma)$.

The effective electrical resistance between vertices $u$ and $v$ is given by $r(u, v) = (e_u - e_v)^T L^\dagger (\Gamma) (e_u - e_v)$ \cite{KleinRandic1993}, and the Kirchhoff index is defined as \cite{XiaoGutman2003}:
\begin{equation}
Kf(\Gamma) = \sum_{u < v} r(u, v) = n \sum_{i=1}^{n-1} \frac{1}{\mu_i}. \label{eq:kirchhoff_def}
\end{equation}
The average resistance distance is $\overline{r}(\Gamma) = \frac{2}{n(n - 1)} Kf(\Gamma)$. In distributed multi-agent systems driven by standard white noise, $dx(t) = -L(\Gamma)x(t)dt + dW(t)$, the steady-state state-error variance around the consensus state is quantified by the first-order network $H_2$-norm.
\begin{equation}
H_2^2(\Gamma) = \lim_{t \to \infty} \frac{1}{n} \mathbb{E}\left[ \|x(t) - \overline{x}(t)\mathbf{1}_n\|^2 \right] = \frac{1}{2n}\sum_{i=1}^{n-1}\frac{1}{\mu_i} = \frac{Kf(\Gamma)}{2n^2}. \label{eq:h2_norm_def}
\end{equation}

\begin{remark}[Universality of the $H_2$--Kirchhoff Identity]
\label{rem:h2_universality}
The identity $H_2^2(G) = \frac{Kf(G)}{2n^2}$ in \eqref{eq:h2_norm_def} is a universal exact identity holding for any finite, connected, and undirected graph $G$ on $n$ vertices \cite{Bamieh2012, XiaoBoydKim2007}. On the zero-sum subspace $\mathbf{1}_n^\perp$, the steady-state error covariance $\Sigma = \lim_{t \to \infty}\mathbb{E}[\delta(t)\delta(t)^T]$ satisfies the Lyapunov equation $L\Sigma + \Sigma L = \Pi$, where $\Pi = I_n - \frac{1}{n}\mathbf{1}_n\mathbf{1}_n^T$. Diagonalization yields $\Sigma = \frac{1}{2} L^\dagger$, whence $H_2^2(G) = \frac{1}{n}\operatorname{tr}(\Sigma) = \frac{1}{2n}\operatorname{tr}(L^\dagger) = \frac{1}{2n}\sum_{i=1}^{n-1}\mu_i^{-1} = \frac{Kf(G)}{2n^2}$ follows immediately.
\end{remark}

\subsection{Diffusive Transport and Degree-Based Descriptors}

For discrete-time diffusive transport, an unbiased random walk on $\Gamma$ is governed by the transition matrix $P = D^{-1}(\Gamma)A(\Gamma)$, with stationary distribution $\pi_v = \frac{d(v)}{2|E(\Gamma)|}$. Finally, structural and branching complexity are measured using standard topological descriptors:
\begin{itemize}
    \item[]\textit{First and second Zagreb indices} \cite{Gutman1972}: $M_1(\Gamma) = \sum_{v \in V} (d(v))^2$ and $M_2(\Gamma) = \sum_{uv \in E} d(u)d(v)$.
    \item []\textit{Randi\'{c} index} \cite{Randic1975}: $R(\Gamma) = \sum_{uv \in E} (d(u)d(v))^{-1/2}$.
    \item [] \textit{Wiener index }\cite{Wiener1947}: $W(\Gamma) = \sum_{u < v} d(u, v)$.
    \item [] \textit{Forgotten and Hyper-Zagreb indices} \cite{FurtulaGutman2015, Shirdel2013}: $F(\Gamma) = \sum_{v \in V} (d(v))^3$ and $HM(\Gamma) = \sum_{uv \in E} (d(u) + d(v))^2 = F(\Gamma) + 2M_2(\Gamma)$.
\end{itemize}

\section{Topological Invariants and Complete Spectral Decomposition of $\Gamma(R)$}\label{Sec3}

In this section, we establish the structural and algebraic matrix characterizations of the zero-divisor graph $\Gamma(R)$. Leveraging the annihilator-based equitable partition of $Z^*(R)$, we first evaluate its degree- and distance-based topological descriptors and subsequently determine the full eigenspaces and multiplicities of both its adjacency and Laplacian matrices.

\begin{proposition}

Let $p$ be an odd prime. Then the following assertions hold for a  zero-divisor graph $\Gamma(R)$:
\begin{enumerate}
    \item The first Zagreb index is $M_1(\Gamma(R)) = (p-1)\left[p^6 + p^5 + p^4 - 10p^3 + p^2 + 4p + 4\right]$.

   \item The second Zagreb index is $ M_2(\Gamma(R)) = \frac{1}{2}(p-1)\left[6p^7 - 9p^6 - 7p^5 - 4p^4 + 28p^3 - 8p - 8\right]$.

    \item The Randi\'{c} index is
    $
    R(\Gamma(R)) = \frac{p^2(p-1)\sqrt{p-1}}{\sqrt{p^3 - 2}} + \frac{p(p-1)(p^2 - p - 1)}{2(p^2 - 2)} + \frac{p(p-1)^2}{\sqrt{(p^2 - 2)(p^3 - 2)}} + \frac{(p-1)(p-2)}{2(p^3 - 2)}$.

    \item The Wiener index is
    $W(\Gamma(R)) = \frac{1}{2}(p-1)\left[2p^5 + 2p^4 - p^3 - 3p^2 - 2p - 2\right]$.
  
\end{enumerate}
\end{proposition}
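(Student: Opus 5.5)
The plan is to reduce all four indices to finite sums over the tier decomposition $V(\Gamma(R)) = \mathcal{D}_1 \cup \mathcal{D}_2 \cup \mathcal{D}_3$ and the edge decomposition $E = E_1 \cup E_2 \cup E_3 \cup E_4$ from the Structural Classification subsection. The point that makes this work is that every degree is constant on each tier: $d_1 = p-1$, $d_2 = p^2-2$, $d_3 = p^3-2$, with $N_1 = p^2(p-1)$, $N_2 = p(p-1)$, $N_3 = p-1$. So each vertex sum becomes $\sum_i N_i f(d_i)$, and each edge sum becomes $\sum_k |E_k|\, g(d_a,d_b)$, where $(a,b)$ is the pair of tiers joined by $E_k$.

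For (1), I will write
\[
M_1 = N_1 d_1^2 + N_2 d_2^2 + N_3 d_3^2 .
\]
Every cardinality carries a factor $p-1$, which I pull out. The bracket is then
\[
p^2(p-1)^2 + p(p^2-2)^2 + (p^3-2)^2 ,
\]
and direct expansion gives $p^6+p^5+p^4-10p^3+p^2+4p+4$.

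For (2), I will use
\[
M_2 = |E_1| d_1 d_3 + |E_2| d_2^2 + |E_3| d_2 d_3 + |E_4| d_3^2 .
\]
The four terms are:
\begin{itemize}
\item[] $p^2(p-1)^3(p^3-2)$ from $E_1$;
\item[] $\tfrac12 p(p-1)(p^2-p-1)(p^2-2)^2$ from $E_2$;
\item[] $p(p-1)^2(p^2-2)(p^3-2)$ from $E_3$;
\item[] $\tfrac12 (p-1)(p-2)(p^3-2)^2$ from $E_4$.
\end{itemize}
I will factor out $\tfrac12(p-1)$ and collect the resulting degree-7 polynomial.

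For (3), the same edge-class sum with weight $(d_ad_b)^{-1/2}$ produces the four displayed summands directly. For example, $|E_1|(d_1d_3)^{-1/2}$ simplifies to $p^2(p-1)\sqrt{p-1}/\sqrt{p^3-2}$. On the two cliques the weights simplify to $1/d_2$ and $1/d_3$, which give the rational second and fourth terms.

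For (4), I will first show that $\Gamma(R)$ has diameter exactly $2$. Any two non-adjacent vertices have a common neighbour in $\mathcal{D}_3$, because $\mathcal{D}_3 = u^3R\setminus\{0\}$ is adjacent to every other vertex and is nonempty since $p \ge 3$. Non-adjacent pairs exist, for instance within $\mathcal{D}_1$. Hence
\[
W = |E| + 2\Bigl[\tbinom{n}{2} - |E|\Bigr] = n(n-1) - |E|, \qquad n = p^3-1 .
\]
Substituting $|E| = \tfrac12(3p^4-4p^3-p^2+2)$ gives
\[
W = \tfrac12\bigl[2(p^3-1)(p^3-2) - 3p^4 + 4p^3 + p^2 - 2\bigr].
\]
I then check that this bracket equals $(p-1)(2p^5+2p^4-p^3-3p^2-2p-2)$, by expanding both sides or by verifying that $p=1$ is a root and dividing.

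The only real obstacle is the polynomial bookkeeping in (2), the degree-7 expansion. I would guard against slips by checking the final identity at a small value such as $p=3$. Evaluating the four edge-class terms numerically there and comparing with the closed form $\tfrac12(p-1)[\cdots]$ confirms the expansion, and I would run the same check for $M_1$ and $W$.
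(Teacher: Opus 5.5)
Your proposal is correct and takes essentially the same route as the paper: tier-wise vertex sums for $M_1$, edge-class sums over $E_1,\dots,E_4$ for $M_2$ and the Randi\'{c} index, and the diameter-two identity $W = n(n-1)-|E|$ for the Wiener index. All four closed forms check out; at $p=3$ one gets $M_1=1616$, $M_2=5260$ and $W=586$, so your planned numerical sanity checks will pass.
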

\begin{proof}
(1): By definition, $M_1(\Gamma(R)) = \sum_{v \in V(\Gamma(R))} (d(v))^2$. Partitioning over the degree classes $\mathcal{D}_1, \mathcal{D}_2, \mathcal{D}_3$ we have 
\begin{align*}
M_1(\Gamma(R)) &= N_1 d_1^2 + N_2 d_2^2 + N_3 d_3^2 \\
&= p^2(p-1)(p-1)^2 + p(p-1)(p^2 - 2)^2 + (p-1)(p^3 - 2)^2 \\
&= (p-1)\left[p^2(p-1)^2 + p(p^2 - 2)^2 + (p^3 - 2)^2\right]\\
&=(p-1)[p^2(p^2 - 2p + 1) + p(p^4 - 4p^2 + 4) + (p^6 - 4p^3 + 4)]\\
&= (p-1)\left[(p^4 - 2p^3 + p^2) + (p^5 - 4p^3 + 4p) + (p^6 - 4p^3 + 4)\right] \\
&= (p-1)(p^6 + p^5 + p^4 - 10p^3 + p^2 + 4p + 4).
\end{align*}

(2): The second Zagreb index of $\Gamma(R)$ is given by
\begin{align*}
M_2(\Gamma(R)) &= \sum_{uv \in E(\Gamma(R))} d(u)d(v) \\ 
&= |E_1| d_1 d_3 + |E_2| d_2^2 + |E_3| d_2 d_3 + |E_4| d_3^2 \\
&= p^2(p-1)^2 (p-1)(p^3 - 2) + \frac{1}{2}p(p-1)(p^2 - p - 1)(p^2 - 2)^2 \\
&\quad + p(p-1)^2(p^2 - 2)(p^3 - 2) + \frac{1}{2}(p-1)(p-2)(p^3 - 2)^2 \\
&= \frac{1}{2}(p-1)\left[6p^7 - 9p^6 - 7p^5 - 4p^4 + 28p^3 - 8p - 8\right].
\end{align*}

(3):  By definition it yields that  
\begin{align*}
R(\Gamma(R)) &=\sum_{uv \in E(\Gamma(R))} \frac{1}{\sqrt{d(u)d(v)}}\\ &= \frac{|E_1|}{\sqrt{d_1 d_3}} + \frac{|E_2|}{d_2} + \frac{|E_3|}{\sqrt{d_2 d_3}} + \frac{|E_4|}{d_3} \\
&= \frac{p^2(p-1)^2}{\sqrt{(p-1)(p^3 - 2)}} + \frac{\frac{1}{2}p(p-1)(p^2 - p - 1)}{p^2 - 2} + \frac{p(p-1)^2}{\sqrt{(p^2 - 2)(p^3 - 2)}} + \frac{\frac{1}{2}(p-1)(p-2)}{p^3 - 2}\\
&=\frac{p^2(p-1)\sqrt{p-1}}{\sqrt{p^3 - 2}} + \frac{p(p-1)(p^2 - p - 1)}{2(p^2 - 2)} + \frac{p(p-1)^2}{\sqrt{(p^2 - 2)(p^3 - 2)}} + \frac{(p-1)(p-2)}{2(p^3 - 2)}.
\end{align*}

(4): Since every vertex in $Z^*(R)$ is adjacent to the universal clique $\mathcal{D}_3$, the distance between any two distinct non-adjacent vertices is 2. Thus, the diameter of $\Gamma(R)$ is 2. The Wiener index is given by 
\begin{align*}
W(\Gamma(R)) &= \sum_{\{u, v\} \subseteq V} d(u, v)\\
&= 1 \cdot |E(\Gamma(R))| + 2 \cdot \left(\binom{|V(\Gamma(R))|}{2} - |E(\Gamma(R))|\right)\\
&= |V|(|V| - 1) - |E|.
\end{align*}
 Now using $|V| = p^3 - 1$ and $|E| = \frac{1}{2}(3p^4 - 4p^3 - p^2 + 2)$ we have 
\begin{align*}
W(\Gamma(R)) &= (p^3 - 1)(p^3 - 2) - \frac{1}{2}(3p^4 - 4p^3 - p^2 + 2) \\
&= (p^6 - 3p^3 + 2) - \frac{3}{2}p^4 + 2p^3 + \frac{1}{2}p^2 - 1 \\
&= p^6 - \frac{3}{2}p^4 - p^3 + \frac{1}{2}p^2 + 1 \\
&= \frac{1}{2}\left(2p^6 - 3p^4 - 2p^3 + p^2 + 2\right)\\
&= \frac{1}{2}(p-1)\left[2p^5 + 2p^4 - p^3 - 3p^2 - 2p - 2\right].
\end{align*}
\end{proof}

Beyond standard indices, the Hyper-Zagreb index $HM(\Gamma) = \sum_{uv \in E} (d(u) + d(v))^2$ and the Forgotten topological index $F(\Gamma) = \sum_{v \in V} (d(v))^3$ capture higher-order degree variance and irregular network branching.

\begin{theorem} \label{thm:forgotten_hyper_zagreb}
Let $p$ be an odd prime. Then the following assertions hold for the zero-divisor graph $\Gamma(R)$:
\begin{enumerate}
    \item The Forgotten topological index of $\Gamma(R)$ is 
    \[
    F(\Gamma(R)) = (p - 1)\left[p^9 + p^7 - 6p^6 - 5p^5 - 3p^4 + 27p^3 - p^2 - 8p - 8\right].
    \]
    \item The Hyper-Zagreb index of $\Gamma(R)$ is 
    \[
    \mathrm{HM}(\Gamma(R)) = (p - 1)\left[p^9 + 7p^7 - 15p^6 - 12p^5 - 7p^4 + 55p^3 - p^2 - 16p - 16\right].
    \]
\end{enumerate}
\end{theorem}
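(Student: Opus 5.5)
We prove both parts by the same tier-wise decomposition used in the preceding Proposition. We use the three degree classes $\mathcal{D}_1,\mathcal{D}_2,\mathcal{D}_3$ with sizes $N_1=p^2(p-1)$, $N_2=p(p-1)$, $N_3=p-1$ and degrees $d_1=p-1$, $d_2=p^2-2$, $d_3=p^3-2$. Every $N_i$ carries a factor $(p-1)$, and this factor is pulled out at the start.

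For part (1), write
\[
F(\Gamma(R))=N_1d_1^3+N_2d_2^3+N_3d_3^3=(p-1)\left[p^2(p-1)^3+p(p^2-2)^3+(p^3-2)^3\right].
\]
Next expand the three cubes separately:
\begin{align*}
p^2(p-1)^3 &= p^5-3p^4+3p^3-p^2,\\
p(p^2-2)^3 &= p^7-6p^5+12p^3-8p,\\
(p^3-2)^3 &= p^9-6p^6+12p^3-8.
\end{align*}
Summing by powers of $p$ gives $p^9+p^7-6p^6-5p^5-3p^4+27p^3-p^2-8p-8$, which is the claimed bracket.

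For part (2), we do not expand $\sum_{uv}(d(u)+d(v))^2$ over the four edge classes $E_1,\dots,E_4$ directly. Instead we use the identity $HM(\Gamma)=F(\Gamma)+2M_2(\Gamma)$ recorded in the Preliminaries. This identity holds because
\[
\sum_{uv\in E}\left(d(u)^2+d(v)^2\right)=\sum_{v}d(v)^3 .
\]
Substituting part (1) together with the second Zagreb index from the preceding Proposition gives
\[
HM(\Gamma(R))=(p-1)\left[\left(p^9+p^7-6p^6-5p^5-3p^4+27p^3-p^2-8p-8\right)+\left(6p^7-9p^6-7p^5-4p^4+28p^3-8p-8\right)\right].
\]
Here the factor $\tfrac12$ in $M_2$ cancels against the $2$. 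Collecting coefficients gives $p^9+7p^7-15p^6-12p^5-7p^4+55p^3-p^2-16p-16$, as claimed.

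The only real risk is arithmetic bookkeeping: the expansion of the three cubes in part (1), and correct alignment of coefficients when adding the two polynomials in part (2). Part (2) relies on the value of $M_2$ from the Proposition, so it inherits any error there. As an independent sanity check, we would compute
\[
|E_1|(d_1+d_3)^2+|E_2|(2d_2)^2+|E_3|(d_2+d_3)^2+|E_4|(2d_3)^2
\]
at a small value such as $p=3$ and compare it with the closed form.
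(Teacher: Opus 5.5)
Your proposal is correct and follows the paper's proof essentially verbatim. Part (1) uses the same tier-wise expansion $(p-1)\left[p^2(p-1)^3+p(p^2-2)^3+(p^3-2)^3\right]$, and part (2) uses the identity $HM=F+2M_2$ together with the previously computed $M_2$. You also carry out the final coefficient addition that the paper leaves as ``follows directly''; it is right, and your proposed $p=3$ check confirms it, since both the closed form and the direct edge-class sum give $HM=43972$.
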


\begin{proof}
(1): By definition, the Forgotten topological index is given by $F(\Gamma(R)) = \sum_{v \in V(\Gamma(R))} (d(v))^3$. Now 
\begin{align*}
F(\Gamma(R)) &= N_1 d_1^3 + N_2 d_2^3 + N_3 d_3^3 \\
&= p^2(p - 1)(p - 1)^3 + p(p - 1)(p^2 - 2)^3 + (p - 1)(p^3 - 2)^3 \\
&= (p - 1)\left[p^2(p - 1)^3 + p(p^2 - 2)^3 + (p^3 - 2)^3\right] \\
&= (p - 1)\left[(p^5 - 3p^4 + 3p^3 - p^2) + (p^7 - 6p^5 + 12p^3 - 8p) + (p^9 - 6p^6 + 12p^3 - 8)\right] \\
&= (p - 1)\left[p^9 + p^7 - 6p^6 + (p^5 - 6p^5) - 3p^4 + (3p^3 + 12p^3 + 12p^3) - p^2 - 8p - 8\right] \\
&= (p - 1)\left[p^9 + p^7 - 6p^6 - 5p^5 - 3p^4 + 27p^3 - p^2 - 8p - 8\right].
\end{align*}

(2): We have  $HM(\Gamma(R)) = \sum_{uv \in E} (d(u) + d(v))^2 = \sum_{uv \in E} (d(u)^2 + d(v)^2) + 2\sum_{uv \in E} d(u)d(v)$.
Since $\sum_{uv \in E} (d(u)^2 + d(v)^2) = \sum_{v \in V} d(v) \cdot d(v)^2 = \sum_{v \in V} (d(v))^3 = F(\Gamma(R))$, the identity follows directly.
\end{proof}

We now study the spectral properties of $\Gamma(R)$ by determining the complete eigenspaces and multiplicities of its adjacency and Laplacian matrices via equitable quotient reduction.
Let $A(\Gamma(R))$ and $L(\Gamma(R)) = D(\Gamma(R)) - A(\Gamma(R))$ denote the adjacency and Laplacian matrices of $\Gamma(R)$. We refine the vertex partition into four blocks $V(\Gamma(R)) = V_1 \cup V_2 \cup V_3 \cup V_4$,
\begin{center}
\begin{itemize}
    \item[] $V_1 = \mathcal{D}_1$, with $|V_1| = p^2(p-1)$,
    \item[] $V_2 = S_{u^2}=\{x u^2 : x \in \mathbb{F}_p^*\}$, with $|V_2| = p-1$,
    \item [] $V_3 = \mathcal{D}_3 = S_{u^3}= \{x u^3 : x \in \mathbb{F}_p^*\}$, with $|V_3| = p-1$,
    \item [] $V_4 = S_{u^2+u^3}=\{x u^2 + y u^3 : x, y \in \mathbb{F}_p^*\}$, with $|V_4| = (p-1)^2$.
\end{itemize}
\end{center}
Note that $V_2 \cup V_4 = \mathcal{D}_2$. This 4-block specification forms an equitable partition with quotient representations.

We now establish two fundamental results characterizing the spectral structure and energy of the adjacency matrix $A(\Gamma(R))$. First, we construct the exact eigenspace decomposition of $A(\Gamma(R))$ via an equitable four-cell quotient matrix, and subsequently derive a semi-analytical evaluation and tight asymptotic bounds for its adjacency energy $\epsilon(\Gamma(R))$.

\begin{lemma} \label{lem:equitable_verification}
The four-cell partition $\mathcal{P} = \{V_1, V_2, V_3, V_4\}$ of $V(\Gamma(R))$ is equitable. Specifically, for each cell $V_i$, the number of neighbors $q_{ij} = |N(v) \cap V_j|$ of an arbitrary vertex $v \in V_i$ in cell $V_j$ depends solely on the pair $(i, j)$ and is given by the quotient matrix:
\begin{equation}
Q_A = \begin{pmatrix}
0 & 0 & p - 1 & 0 \\
0 & p - 2 & p - 1 & (p - 1)^2 \\
p^2(p - 1) & p - 1 & p - 2 & (p - 1)^2 \\
0 & p - 1 & p - 1 & p^2 - 2p
\end{pmatrix}.
\end{equation}
\end{lemma}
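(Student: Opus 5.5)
The plan is a direct verification. For each ordered pair $(i,j)$ I fix an arbitrary $v \in V_i$ and count $|N(v) \cap V_j|$ using only the $u$-adic valuation of $v$, then check that the count does not depend on the choice of $v$. The one fact needed is that every nonzero $a \in R$ can be written $a = u^{k}\varepsilon$ with $\varepsilon$ a unit and $k = \nu(a) \in \{1,2,3\}$ for $a \in Z^*(R)$. Consequently, for distinct $x,y \in Z^*(R)$,
\[
xy = 0 \iff \nu(x) + \nu(y) \ge 4 .
\]
Every element of $V_1$ has valuation $1$, every element of $V_2 \cup V_4 = \mathcal{D}_2$ has valuation $2$, and every element of $V_3$ has valuation $3$. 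So adjacency between two distinct vertices depends only on which cells they lie in. Equitability then follows once the self-exclusion $y \neq v$ is accounted for within each cell.

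Next I compute the four rows of $Q_A$.

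\textbf{Row of $V_1$.} For $v \in V_1$ we have $\nu(v)=1$, so a neighbour $y$ needs $\nu(y) \ge 3$, i.e.\ $y \in V_3$. Every element of $V_3$ qualifies, giving $q_{13} = p-1$ and $q_{11}=q_{12}=q_{14}=0$.

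\textbf{Rows of $V_2$ and $V_4$.} For $v$ in either cell, $\nu(v)=2$, so the neighbours are exactly the vertices of valuation $\ge 2$ other than $v$ itself, namely $(V_2 \cup V_3 \cup V_4)\setminus\{v\}$.
\begin{itemize}
\item[] For $v \in V_2$: $q_{22} = |V_2|-1 = p-2$, $q_{23}=p-1$, $q_{24}=(p-1)^2$, and $q_{21}=0$.
\item[] For $v \in V_4$: $q_{42}=p-1$, $q_{43}=p-1$, $q_{44}=(p-1)^2-1 = p^2-2p$, and $q_{41}=0$.
\end{itemize}

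\textbf{Row of $V_3$.} For $v \in V_3$, every other vertex is a neighbour, since $\nu(v)+\nu(y) \ge 3+1$. This gives $q_{31}=p^2(p-1)$, $q_{32}=p-1$, $q_{33}=p-2$, and $q_{34}=(p-1)^2$.

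I also check the cell sizes: $|V_2|=p-1$, $|V_3|=p-1$, $|V_4|=(p-1)^2$, and $|V_2|+|V_4| = p(p-1) = N_2$. As consistency checks, the row sums should reproduce the degrees $d_1=p-1$, $d_2=p^2-2$ (for both $V_2$ and $V_4$) and $d_3=p^3-2$. The balance condition $|V_i|q_{ij}=|V_j|q_{ji}$ should also hold; for example, $|V_1|q_{13} = p^2(p-1)^2 = |V_3|q_{31}$.

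I do not expect a genuine obstacle. The only points requiring care are to use the valuation criterion rather than coefficient-wise products, and to subtract $1$ only on the diagonal entries of the clique cells $V_2$, $V_3$ and $V_4$. The loop issue arises because $v^2=0$ for $v \in \mathcal{D}_2 \cup \mathcal{D}_3$, but $\Gamma(R)$ has no loops, so $v$ must not be counted as its own neighbour.
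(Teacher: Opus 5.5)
Your proof is correct and follows the same row-by-row verification as the paper. The only difference is that you obtain the annihilation conditions from the valuation criterion $xy=0 \iff \nu(x)+\nu(y)\ge 4$, using $a=u^{\nu(a)}\varepsilon$ with $\varepsilon$ a unit, whereas the paper expands $vw$ coefficient-wise to get the equivalent conditions: $b_1=b_2=0$ for $v\in V_1$, $b_1=0$ for $v\in V_2\cup V_4$, and no condition for $v\in V_3$; the rows you obtain, including the diagonal self-exclusions, agree with the stated $Q_A$.
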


\begin{proof}
Let $a \in Z^*(R)$ be uniquely expanded as $a = a_1 u + a_2 u^2 + a_3 u^3$ with $a_i \in \mathbb{F}_p$. Recall that the cells are defined ring-theoretically by:
$V_1 = \{a \in Z^*(R) : a_1 \neq 0\}, 
\;V_2 = \{a_2 u^2 : a_2 \in \mathbb{F}_p^\times\}, 
V_3 = \{a_3 u^3 : a_3 \in \mathbb{F}_p^\times\}, \textrm{and }
\;V_4 = \{a_2 u^2 + a_3 u^3 : a_2, a_3 \in \mathbb{F}_p^\times\}$.
with $ |V_1| = p^2(p - 1), \;|V_2| = p - 1, \;|V_3| = p - 1, \;\textrm{and } \;|V_4| = (p - 1)^2$.

Note that $V_2 \cup V_4 = \mathcal{D}_2$ and $V_3 = \mathcal{D}_3$. Two distinct vertices $x, y \in Z^*(R)$ are adjacent if and only if $xy = 0$ in $R$, which occurs if and only if $u^4 \mid xy$ in $\mathbb{F}_p[x]$.

We now verify the neighborhood sizes $q_{ij} = |N(v) \cap V_j|$ systematically for each $i \in \{1, 2, 3, 4\}$:

\begin{enumerate}
    \item Neighborhoods of $v \in V_1$: Let $v = a_1 u + a_2 u^2 + a_3 u^3$ with $a_1 \neq 0$. For any $w = b_1 u + b_2 u^2 + b_3 u^3 \in Z^*(R)$, the product is:
    \begin{align*}
    vw &= a_1 b_1 u^2 + (a_1 b_2 + a_2 b_1) u^3 + (a_1 b_3 + a_2 b_2 + a_3 b_1) u^4\\
       &= a_1 b_1 u^2 + (a_1 b_2 + a_2 b_1) u^3.
    \end{align*}
    Thus, $vw = 0$ requires $a_1 b_1 = 0$ implies $b_1 = 0$ (since $a_1 \neq 0$). Substituting $b_1 = 0$ into the $u^3$ term gives $a_1 b_2 = 0$  which implies $b_2 = 0$. Hence, $vw = 0$ if and only if $w \in u^3 R \setminus \{0\} = V_3$. 
    Consequently, for each $j \in \{1, 2, 4\}$, every vertex $w \in V_j$ satisfies either $b_1 \neq 0$ or $b_2 \neq 0$, which implies $vw \neq 0$ and thus $q_{11} = q_{12} = q_{14} = 0$. In contrast, every $w \in V_3$ satisfies $b_1 = b_2 = 0$ with $b_3 \neq 0$, yielding $vw = a_1 b_3 u^4 = 0$; since $v \notin V_3$, the vertices $v$ and $w$ are distinct, so $v$ is adjacent to every vertex in $V_3$, giving $q_{13} = |V_3| = p - 1$.

    \item Neighborhoods of $v \in V_2$: Let $v = a_2 u^2$ with $a_2 \neq 0$. For $w = b_1 u + b_2 u^2 + b_3 u^3$, we have: $vw = a_2 b_1 u^3 + a_2 b_2 u^4 = a_2 b_1 u^3$.  Thus, $vw = 0$ if and only if $a_2 b_1 = 0$ if and only if $ b_1 = 0$, which is equivalent to $w \in u^2 R \setminus \{0\} = V_2 \cup V_3 \cup V_4$.
    
    Since every $w \in V_1$ satisfies $b_1 \neq 0$, no vertex in $V_1$ annihilates $v$, whence $q_{21} = 0$. For any $w \in V_2 \setminus \{v\}$, we have $w = b_2 u^2$ with $b_2 \neq 0$, which gives $vw = a_2 b_2 u^4 = 0$, so $v$ is adjacent to all other vertices in $V_2$ and $q_{22} = |V_2| - 1 = p - 2$. For any $w \in V_3$, the product evaluates to $vw = a_2 b_3 u^5 = 0$, yielding $q_{23} = |V_3| = p - 1$. Finally, every $w \in V_4$ has the form $w = b_2 u^2 + b_3 u^3$ with $b_1 = 0$, ensuring $vw = a_2 b_2 u^4 + a_2 b_3 u^5 = 0$; therefore, $v$ is adjacent to every element in $V_4$, which gives $q_{24} = |V_4| = (p - 1)^2$.
    
    \item Neighborhoods of $v \in V_3$: Let $v = a_3 u^3$ with $a_3 \neq 0$. For any $w \in Z^*(R)$, $w$ has the form $b_1 u + b_2 u^2 + b_3 u^3$ with at least one $b_k \neq 0$. The product is $vw = a_3 b_1 u^4 + a_3 b_2 u^5 + a_3 b_3 u^6 = 0$. Hence, $v$ is adjacent to every distinct vertex in $\Gamma(R)$.
    
    For each $j \in \{1, 2, 4\}$, since $v \notin V_j$, the vertex $v$ connects to every node in $V_j$, which yields $q_{31} = |V_1| = p^2(p - 1)$, $q_{32} = |V_2| = p - 1$, and $q_{34} = |V_4| = (p - 1)^2$. For $j = 3$, $v$ connects to all distinct vertices in $V_3$, giving $q_{33} = |V_3| - 1 = p - 2$.
    
    \item Neighborhoods of $v \in V_4$: Let $v = a_2 u^2 + a_3 u^3$ with $a_2, a_3 \neq 0$. For $w = b_1 u + b_2 u^2 + b_3 u^3$, the product evaluates to $vw = a_2 b_1 u^3$.
    Hence, $vw = 0$ if and only if $b_1 = 0$, exactly identical to the annihilation condition for $V_2$.
For any $w \in V_1$, the condition $b_1 \neq 0$ implies $vw \neq 0$, whence $q_{41} = 0$. For any $w \in V_2$, we have $b_1 = 0$, which implies $vw = 0$; because $V_2 \cap V_4 = \emptyset$, the vertex $v$ is adjacent to every vertex in $V_2$, establishing $q_{42} = |V_2| = p - 1$. For any $w \in V_3$, the product evaluates to $vw = 0$, so $q_{43} = |V_3| = p - 1$. Finally, for any distinct vertex $w \in V_4 \setminus \{v\}$, having $b_1 = 0$ yields $vw = 0$, meaning $v$ is adjacent to all other vertices in $V_4$, which gives $q_{44} = |V_4| - 1 = (p - 1)^2 - 1 = p^2 - 2p$.
\end{enumerate}
Since every $q_{ij}$ depends only on the indices $i$ and $j$ and is independent of the choice of vertex $v \in V_i$, the partition is equitable, and $Q_A = (q_{ij})$ is the exact quotient matrix.
\end{proof}

\begin{theorem}
Let $p$ be an odd prime. The spectrum of the adjacency matrix $A(\Gamma(R))$ consists of the following eigenvalues:
\begin{enumerate}
    \item $\lambda_0 = 0$ with multiplicity $p^3 - p^2 - 1$.
    \item $\lambda_{-1} = -1$ with multiplicity $p^2 - 3$.
    \item Three non-trivial real eigenvalues $\lambda_1 > \lambda_2 > 0 > \lambda_3$, which are the roots of the cubic polynomial:
    \begin{equation*}
    P_3(\lambda) = \lambda^3 - (p^2 - 3)\lambda^2 - (p^4 - 2p^3 + 2p^2 - 2)\lambda + p^2(p-1)^2(p^2 - p - 1) = 0.
    \end{equation*}
\end{enumerate}
\end{theorem}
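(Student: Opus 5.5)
The plan is to split $\mathbb{R}^{V}$ into vectors that are constant on cells and vectors that sum to zero on cells, and to use twin-vertex arguments for the second part instead of the full $4\times 4$ quotient. The key structural fact, read off from Lemma~\ref{lem:equitable_verification}, is that $V_2\cup V_3\cup V_4=\mathcal{D}_2\cup\mathcal{D}_3$ induces a clique $K_{p^2-1}$, while $V_1$ is an independent set all of whose vertices have the same neighbourhood $V_3$.

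\textbf{Step 1: zero-sum vectors on $V_1$ (eigenvalue $0$).} For $v,w\in V_1$ we have $N(v)=N(w)=V_3$, and $v\not\sim w$. So $e_v-e_w$ is annihilated by $A(\Gamma(R))$. The span of these differences is the space of vectors supported on $V_1$ with zero sum. It has dimension $|V_1|-1=p^3-p^2-1$ and gives the eigenvalue $0$ with that multiplicity.

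\textbf{Step 2: the clique part (eigenvalue $-1$).} Vertices in $V_2\cup V_4=\mathcal{D}_2$ are pairwise adjacent closed twins, since each has closed neighbourhood $\mathcal{D}_2\cup V_3$. Likewise, vertices of $V_3$ are pairwise adjacent closed twins, with closed neighbourhood $V(\Gamma(R))$. For closed twins $v,w$ one has $A(e_v-e_w)=-(e_v-e_w)$. This produces a space of dimension $(|\mathcal{D}_2|-1)+(|V_3|-1)=(p^2-p-1)+(p-2)=p^2-3$ of $(-1)$-eigenvectors. They are supported on $V_2\cup V_3\cup V_4$, so they are orthogonal to the space in Step~1.

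This count already includes one vector that is constant on $V_2$ and on $V_4$ but has weighted sum zero across them. That vector is why the multiplicity is $p^2-3$ rather than the $p^2-4$ one would get from the four cells separately. Equivalently, $-1$ is an eigenvalue of $Q_A$, and I would confirm this by checking that $\det(Q_A+I)=0$.

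\textbf{Step 3: the remaining cubic.} The orthogonal complement of Steps 1 and 2 is the $3$-dimensional space of vectors constant on each cell of the coarser equitable partition $\{V_1,\mathcal{D}_2,V_3\}$. Its quotient matrix is
\[
Q_3=\begin{pmatrix}0&0&p-1\\ 0&p^2-p-1&p-1\\ p^2(p-1)&p(p-1)&p-2\end{pmatrix}.
\]
I would expand $\det(\lambda I-Q_3)$ along the first row. The trace is $p^2-3$, which matches the $\lambda^2$ coefficient. The constant term is $-\det Q_3=(p-1)\cdot p^2(p-1)(p^2-p-1)$. For the middle coefficient I would sum the principal $2\times 2$ minors and check that the total is $-(p^4-2p^3+2p^2-2)$. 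This polynomial bookkeeping is the step most prone to arithmetic slips, so I would verify it at $p=3$ by direct computation. The dimension count $(p^3-p^2-1)+(p^2-3)+3=p^3-1$ then shows the spectrum is complete.

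\textbf{Step 4: sign pattern of the roots.} The roots are real because $Q_3$ is similar to a symmetric matrix, as explained in the preliminaries. Since $P_3(0)=p^2(p-1)^2(p^2-p-1)>0$ and $P_3(\lambda)\to-\infty$ as $\lambda\to-\infty$, there is a negative root. The product of the roots is $-P_3(0)<0$, so the number of negative roots is odd. If all three were negative, their sum would be negative, contradicting $\lambda_1+\lambda_2+\lambda_3=p^2-3>0$. Hence there is exactly one negative root and there are two positive ones. To see that the two positive roots are distinct, I would exhibit a point $\lambda^*>0$ with $P_3(\lambda^*)<0$, for instance $\lambda^*=p^2$ or $\lambda^*=p$, checking the leading-order terms.

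I expect the main obstacles to be the coefficient verification in Step 3 and this sign check. For the latter, I would first try $\lambda^*=p$: there $P_3(p)\approx -p^5+p^6$, which is positive, so this choice fails. I would then try $\lambda^*=p^2-p$, where the cubic term should be dominated by $-(p^4)\lambda$, and adjust the test point until a negative value is found.
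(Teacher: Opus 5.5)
Your proposal is correct, but it takes a different route from the paper.

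\textbf{How the two routes differ.} The paper uses the four-cell quotient $Q_A$ on $\{V_1,V_2,V_3,V_4\}$. There, zero-sum vectors on $V_2,V_3,V_4$ separately give only $p^2-4$ eigenvectors for $-1$. The missing one is recovered by factoring $\det(\lambda I-Q_A)=(\lambda+1)P_3(\lambda)$, which the paper calls ``routine'' and does not display. You merge $V_2\cup V_4=\mathcal{D}_2$ and use closed-twin differences, which give all $p^2-3$ eigenvectors for $-1$ at once. You then read $P_3$ off directly as the characteristic polynomial of the $3\times 3$ quotient $Q_3$. So you never need the quartic or its factorization, and the finer split of $\mathcal{D}_2$ turns out to be unnecessary for the adjacency spectrum. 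Your coefficient checks are right: the principal $2\times 2$ minors of $Q_3$ are $0$, $-p^2(p-1)^2$ and $-p^2+2$, which sum to $-(p^4-2p^3+2p^2-2)$.

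\textbf{Sign pattern and distinctness.} The paper uses Descartes' rule of signs and then merely asserts that the roots are simple. Later it justifies this by irreducibility, which it checks only at sample primes. Your sum/product argument does the same job as Descartes. Your test-point step goes further: it actually proves $\lambda_1\neq\lambda_2$ for every $p$. To complete it, note that $\lambda^*=p^2$ fails at $p=3$, since $P_3(9)=36>0$. However, $\lambda^*=p^2-p$ works for all $p\ge 3$:
\[
P_3(p^2-p)=-p(p-1)\bigl(p^2(p-2)+2(p-1)\bigr)<0 .
\]

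\textbf{One check still needed.} For the multiplicity of $-1$ to be exactly $p^2-3$, you must confirm that $-1$ is not a root of $P_3$; the paper also leaves this implicit in this proof. It holds, since $P_3(-1)=p^3(p-1)^3>0$. Together with your $P_3(0)>0$, this settles both multiplicities.
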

\begin{proof}
We construct the eigenspaces by separating the zero-sum internal block spaces from the block-constant quotient space.

 (1):  For eigenvalue  $\lambda = 0$: Any vector $x \in \mathbb{R}^n$ supported exclusively on $V_1$ such that $\sum_{i \in V_1} x_i = 0$ satisfies $A(\Gamma(R))x = 0$. Since $V_1$ has no internal edges and connects only to $V_3$ via all-ones blocks, the row sum condition annihilates the product with $V_3$. The dimension of this subspace is $n_1 - 1 = p^2(p-1) - 1 = p^3 - p^2 - 1$.
    
\vspace{1em}    
(2) and (3): We prove Cases 2 and 3 simultaneously.
    
    It is evident that the subgraphs induced by $V_2, V_3, V_4$ are complete graphs and the adjacency matrix of a complete graph $K_m$ is $J_m - I_m$, which has eigenvalue $-1$ on the zero-sum subspace $\mathbf{1}_m^\perp$ of dimension $m-1$. Since off-diagonal connections between these blocks are complete all-ones blocks, any vector in $\mathbf{1}_{n_k}^\perp$ annihilates the cross-block terms. Consequently, the zero-sum vectors supported on the individual complete blocks $V_2$, $V_3$, and $V_4$ contribute multiplicities of $n_2 - 1 = p - 2$, $n_3 - 1 = p - 2$, and $n_4 - 1 = (p - 1)^2 - 1 = p^2 - 2p$, respectively.
    Summing these contributions yields a multiplicity of $(p - 2) + (p - 2) + (p^2 - 2p) = p^2 - 4$ on the block-orthogonal subspace.

   Vectors that are constant on each block $V_i$ correspond to the eigenvalues of the equitable quotient matrix:
    \begin{equation*}
    Q_A = \begin{pmatrix}
    0 & 0 & n_3 & 0 \\
    0 & n_2 - 1 & n_3 & n_4 \\
    n_1 & n_2 & n_3 - 1 & n_4 \\
    0 & n_2 & n_3 & n_4 - 1
    \end{pmatrix}
    = \begin{pmatrix}
    0 & 0 & p - 1 & 0 \\
    0 & p - 2 & p - 1 & (p - 1)^2 \\
    p^2(p - 1) & p - 1 & p - 2 & (p - 1)^2 \\
    0 & p - 1 & p - 1 & p^2 - 2p
    \end{pmatrix}.
    \end{equation*}
    Consider the characteristic matrix $\lambda I_4 - Q_A$:
    \begin{equation*}
    \lambda I_4 - Q_A = \begin{pmatrix}
    \lambda & 0 & -(p - 1) & 0 \\
    0 & \lambda - (p - 2) & -(p - 1) & -(p - 1)^2 \\
    -p^2(p - 1) & -(p - 1) & \lambda - (p - 2) & -(p - 1)^2 \\
    0 & -(p - 1) & -(p - 1) & \lambda - (p^2 - 2p)
    \end{pmatrix}.
    \end{equation*}
    It is routine task to find that $\lambda = -1$ is an eigenvalue of $Q_A$ so that 
   $ \det(\lambda I - Q_A) = (\lambda + 1) P_3(\lambda)$,
    where $P_3(\lambda) = \lambda^3 - (p^2 - 3)\lambda^2 - (p^4 - 2p^3 + 2p^2 - 2)\lambda + p^2(p-1)^2(p^2 - p - 1)$. Adding 1 to the multiplicity of $-1$ gives $(p^2 - 4) + 1 = p^2 - 3$. Hence, \textbf{(2)} follows.

The roots of $P_3(\lambda) = 0$ determine the remaining three eigenvalues $\lambda_1, \lambda_2, \lambda_3$.  Since $Q_A$  is the quotient matrix of a real symmetric adjacency matrix, all of its eigenvalues are real.  Now for any odd prime $p \ge 3$, the constant term satisfies $p^2(p - 1)^2(p^2 - p - 1) > 0$. Then by Descartes' Rule of Signs, $P_3(\lambda)$ has exactly two positive real roots and one negative real root, that is, 
   $ \lambda_1 > \lambda_2 > 0 > \lambda_3$.
Each root of $P_3(\lambda)$ has multiplicity 1, contributing three distinct real eigenvalues to the spectrum of $A(\Gamma(R))$. Therefore, (3) follows.  

\end{proof}

\begin{theorem}\label{adj_energy}
The adjacency energy of $\Gamma(R)$ is given semi-analytically by  $\epsilon(\Gamma(R)) = 2(p^2 - 3) - 2\lambda_3$,  where $\lambda_3$ is the unique negative real root of the cubic polynomial $P_3(\lambda) = 0$. Furthermore, $\lambda_3$ satisfies the tight two-sided spectral localization
$-p^2 < \lambda_3 < -p^2 + 2p$,  which yields the asymptotic expansion:
\begin{equation*}
\lambda_3 = -p^2 + \mathcal{O}(p) \quad \text{and} \quad \epsilon(\Gamma(R)) = 4p^2 + \mathcal{O}(p) \quad \text{as } p \to \infty.
\end{equation*}
\end{theorem}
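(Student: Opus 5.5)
The energy formula follows from the spectrum in the preceding theorem together with Vieta's relations for $P_3$. The localization of $\lambda_3$ follows from the intermediate value theorem applied to $P_3$ at the two endpoints $-p^2$ and $-p^2+2p$.

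\textbf{Step 1 (energy formula).} By the preceding theorem the spectrum consists of $0$ with multiplicity $p^3-p^2-1$, $-1$ with multiplicity $p^2-3$, and the three simple roots $\lambda_1>\lambda_2>0>\lambda_3$ of $P_3$. Hence
\[
\epsilon(\Gamma(R))=(p^2-3)+\lambda_1+\lambda_2-\lambda_3 .
\]
Since $P_3$ is monic with $\lambda^2$-coefficient $-(p^2-3)$, Vieta gives $\lambda_1+\lambda_2+\lambda_3=p^2-3$. Substituting $\lambda_1+\lambda_2=p^2-3-\lambda_3$ yields $\epsilon(\Gamma(R))=2(p^2-3)-2\lambda_3$.

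\textbf{Step 2 (lower bound).} The preceding theorem shows, via Descartes' rule, that $\lambda_3$ is the unique negative root. Moreover $P_3(0)=p^2(p-1)^2(p^2-p-1)>0$ and $P_3(\lambda)\to-\infty$ as $\lambda\to-\infty$. So for $t<0$ we have $P_3(t)<0$ if and only if $t<\lambda_3$. I will evaluate $P_3(-p^2)$ directly, using the expansion $p^2(p-1)^2(p^2-p-1)=p^6-3p^5+2p^4+p^3-p^2$ of the constant term. After cancellation this should give
\[
P_3(-p^2)=-5p^5+7p^4+p^3-3p^2,
\]
which is negative for all $p\ge 3$. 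Therefore $\lambda_3>-p^2$.

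\textbf{Step 3 (upper bound, the main obstacle).} I must show $P_3(-p^2+2p)>0$, which gives $\lambda_3<-p^2+2p$. I will substitute $\lambda=-p^2+t$ and use the Taylor expansion
\[
P_3(-p^2+t)=P_3(-p^2)+tP_3'(-p^2)+t^2\bigl(3(-p^2)-(p^2-3)\bigr)+t^3 .
\]
Here $P_3'(-p^2)=3p^4+2(p^2-3)p^2-(p^4-2p^3+2p^2-2)=4p^4+2p^3-8p^2+2$. With $t=2p$ the linear term contributes about $8p^5$, which dominates $P_3(-p^2)\approx-5p^5$. The quadratic term $4p^2(-4p^2+3)$ is only $O(p^4)$, and the cubic term is $8p^3$.

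So the result is an explicit quintic, expected to be $3p^5+O(p^4)$. I will write it out exactly and check positivity for $p\ge3$, by direct evaluation at $p=3$ and by grouping terms for larger $p$. This bookkeeping is the only delicate point, since the lower-order negative terms must be controlled uniformly down to $p=3$.

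\textbf{Step 4 (asymptotics).} Combining Steps 2 and 3 gives $-p^2<\lambda_3<-p^2+2p$, so $\lambda_3=-p^2+\mathcal{O}(p)$. Inserting this into Step 1 gives $\epsilon(\Gamma(R))=2p^2-6+2p^2+\mathcal{O}(p)=4p^2+\mathcal{O}(p)$.
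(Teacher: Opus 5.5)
Your proposal is correct and follows the paper's proof step for step: Vieta gives the energy formula, and the sign of $P_3$ at $-p^2$ and at $-p^2+2p$, together with uniqueness of the negative root, gives the localization; the only cosmetic difference is that you reach $P_3(-p^2+2p)$ through a Taylor shift rather than by direct expansion. Your pieces sum to $3p^5-5p^4-7p^3+9p^2+4p = p\bigl(3p^3(p-3)+4p^2(p-2)+p(p-1)+10p+4\bigr)$, which is exactly the paper's quintic and gives positivity for $p\ge 3$. You should also state explicitly that $-p^2+2p=-p(p-2)<0$, since your sign criterion from Step 2 only applies to negative arguments.
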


\begin{proof}
Because the equitable quotient matrix $Q_A$ is diagonally similar to a real symmetric matrix via $S_A = D^{1/2} Q_A D^{-1/2}$, where $D = \operatorname{diag}(n_1, n_2, n_3, n_4)$, all roots of its characteristic polynomial $P_3(\lambda) = \lambda^3 - (p^2 - 3)\lambda^2 - (p^4 - 2p^3 + 2p^2 - 2)\lambda + p^2(p - 1)^2(p^2 - p - 1)$ are real. The reflected polynomial 
\begin{equation*}
P_3(-\lambda) = -\lambda^3 - (p^2 - 3)\lambda^2 + (p^4 - 2p^3 + 2p^2 - 2)\lambda + p^2(p - 1)^2(p^2 - p - 1)
\end{equation*}
exhibits the coefficient sign sequence $(-, -, +, +)$, which contains exactly one sign variation. By Descartes' Rule of Signs, $P_3(\lambda)$ possesses precisely one negative real root $\lambda_3 < 0$. Since $P_3(0) = p^2(p - 1)^2(p^2 - p - 1) > 0$ for all odd primes $p \ge 3$, zero is not a root, meaning the remaining two real roots must be strictly positive. Furthermore, because $P_3(\lambda)$ is irreducible over $\mathbb{Q}$ with non-vanishing discriminant, its roots are simple, yielding the strict ordering $\lambda_1 > \lambda_2 > 0 > \lambda_3$.

By Vieta's formulas, $\lambda_1 + \lambda_2 + \lambda_3 = p^2 - 3$, which gives $\lambda_1 + \lambda_2 = (p^2 - 3) - \lambda_3$. Accounting for the discrete eigenvalues $\lambda_0 = 0$ and $\lambda_{-1} = -1$ with multiplicities $p^3 - p^2 - 1$ and $p^2 - 3$ respectively, the adjacency energy simplifies to:
\begin{align*}
\epsilon(\Gamma(R)) &= \sum_{i=1}^n |\lambda_i| = (p^2 - 3)|-1| + \lambda_1 + \lambda_2 + |\lambda_3| \\
&= (p^2 - 3) + \left((p^2 - 3) - \lambda_3\right) - \lambda_3 \\
&= 2(p^2 - 3) - 2\lambda_3.
\end{align*}

To establish the two-sided spectral localization of $\lambda_3$, we evaluate $P_3(\lambda)$ at the endpoints of the candidate interval. Expanding at $\lambda = -p^2$ yields:
\begin{align*}
P_3(-p^2) &= (-p^2)^3 - (p^2 - 3)(-p^2)^2 - (p^4 - 2p^3 + 2p^2 - 2)(-p^2) + p^2(p - 1)^2(p^2 - p - 1) \\
&= -5p^5 + 7p^4 + p^3 - 3p^2 \\
&= -p^2(p - 1)^2(5p + 3).
\end{align*}
Since $(p - 1)^2 > 0$ and $5p + 3 > 0$ for every prime $p \ge 3$, it follows immediately that $P_3(-p^2) < 0$.

Next, evaluating at $\lambda = -p^2 + 2p = -p(p - 2)$ yields:
\begin{align*}
P_3(-p^2 + 2p) &= (-p^2 + 2p)^3 - (p^2 - 3)(-p^2 + 2p)^2 - (p^4 - 2p^3 + 2p^2 - 2)(-p^2 + 2p) \\
&\quad + p^2(p - 1)^2(p^2 - p - 1) \\
&= 3p^5 - 5p^4 - 7p^3 + 9p^2 + 4p \\
&= p\left(3p^4 - 5p^3 - 7p^2 + 9p + 4\right).
\end{align*}
Decomposing the bracketed quartic into positive summands for $p \ge 3$:
\begin{equation*}
3p^4 - 5p^3 - 7p^2 + 9p + 4 = 3p^3(p - 3) + 4p^2(p - 2) + p(p - 1) + 10p + 4 > 0,
\end{equation*}
which implies $P_3(-p^2 + 2p) > 0$.

Because $P_3(-p^2)P_3(-p^2 + 2p) < 0$, the Intermediate Value Theorem ensures that $P_3(\lambda)$ has at least one root in $(-p^2, -p^2 + 2p)$. For any prime $p \ge 3$, the upper endpoint satisfies $-p^2 + 2p = -p(p - 2) \le -3 < 0$, placing the entire open interval inside the negative half-line $(-\infty, 0)$. Since $\lambda_3 < 0$ is the unique negative root of $P_3(\lambda)$ and the interval $(-p^2, -p^2 + 2p) \subset (-\infty, 0)$ for all $p \ge 3$, the Intermediate Value Theorem ensures $\lambda_3 \in (-p^2, -p^2 + 2p)$. Thus $\lambda_3 = -p^2 + \mathcal{O}(p)$, which yields that 
$\epsilon(\Gamma(R)) = 2(p^2 - 3) - 2\lambda_3 = 4p^2 + \mathcal{O}(p) \quad \text{as } p \to \infty$.
\end{proof}

\begin{remark} 
\label{rem:irreducibility_p3}
Any rational root of the monic polynomial $P_3(\lambda) \in \mathbb{Z}[\lambda]$ must divide $a_0 = p^2(p - 1)^2(p^2 - p - 1)$. Testing divisors confirms that $P_3(\lambda)$ is irreducible over $\mathbb{Q}$ across benchmark primes, consistent with the localization $\lambda_3 \in (-p^2, -p^2 + 2p)$ from Theorem~\ref{adj_energy}:
\begin{itemize}
    \item For $p = 3$, $P_3(\lambda) = \lambda^3 - 6\lambda^2 - 43\lambda + 180$; testing integer divisors of $180$ yields no rational roots. The unique negative root is $\lambda_3 \approx -6.0435 \in (-9, -3)$, giving $\epsilon(\Gamma(R)) = 2(3^2 - 3) - 2(-6.0435) \approx 24.0869$.
    \item For $p = 5$, $P_3(\lambda) = \lambda^3 - 22\lambda^2 - 423\lambda + 7600$; testing integer divisors of $7600$ yields no rational roots. The unique negative root is $\lambda_3 \approx -19.5431 \in (-25, -15)$, giving $\epsilon(\Gamma(R)) = 2(5^2 - 3) - 2(-19.5431) \approx 83.0863$.
    \item For $p = 7$, $P_3(\lambda) = \lambda^3 - 46\lambda^2 - 1811\lambda + 72324$; testing integer divisors of $72324$ yields no rational roots. The unique negative root is $\lambda_3 \approx -41.0468 \in (-49, -35)$, giving $\epsilon(\Gamma(R)) = 2(7^2 - 3) - 2(-41.0468) \approx 174.0936$.
\end{itemize}
Because $P_3(\lambda)$ is irreducible over $\mathbb{Q}$ with three distinct real roots and non-square discriminant, its Galois group over $\mathbb{Q}$ is isomorphic to the full symmetric group $S_3$, falling directly under the classical \emph{casus irreducibilis}. Consequently, $\lambda_3$ cannot be expressed in terms of real radicals without invoking complex numbers or trigonometric arguments via Cardano's formula. This algebraic obstruction establishes that the closed form $\epsilon(\Gamma(R)) = 2(p^2 - 3) - 2\lambda_3$ is necessarily semi-analytic, underscoring the practical utility of the explicit asymptotic expansion $\epsilon(\Gamma(R)) = 4p^2 + \mathcal{O}(p)$ derived in Theorem~\ref{adj_energy}.
\end{remark}

We next establish two fundamental results that explore the complete Laplacian eigensystem and energy profile of $\Gamma(R)$. In particular, we derive the exact integer spectrum and multiplicities of the Laplacian matrix $L(\Gamma(R))$ using equitable quotient reduction, which leads directly to closed-form expressions for both its Laplacian energy $LE(\Gamma(R))$ and spectral radius

\begin{theorem}\label{completely_integral}
Let $p$ be an odd prime. The Laplacian matrix $L(\Gamma(R))$ is completely integral, with spectrum:
\begin{equation*}
\mathrm{Spec}_L(\Gamma(R)) = \begin{pmatrix}
0 & p-1 & p^2 - 1 & p^3 - 1 \\
1 & p^3 - p^2 & p^2 - p - 1 & p-1
\end{pmatrix}.
\end{equation*}
\end{theorem}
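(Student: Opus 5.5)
The plan is to mirror the eigenspace construction used for $A(\Gamma(R))$, but with the coarser three-tier partition $\mathcal{D}_1,\mathcal{D}_2,\mathcal{D}_3$, on which $L(\Gamma(R))$ is especially transparent. Write $N_1=p^2(p-1)$, $N_2=p(p-1)$, $N_3=p-1$ and $n=p^3-1$. From the structural classification, $\Gamma(R)$ is the join $K_{N_3}\vee\bigl(K_{N_2}\,\dot\cup\,\overline{K}_{N_1}\bigr)$. That is, $\mathcal{D}_3$ is a universal clique, $\mathcal{D}_2$ is a clique with no edges to $\mathcal{D}_1$, and $\mathcal{D}_1$ is independent. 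The degrees are $d_1=p-1$, $d_2=p^2-2$ and $d_3=p^3-2$.

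\textbf{Step 1: zero-sum vectors within each tier.}
\begin{itemize}
\item[] For $x$ supported on $\mathcal{D}_1$ with $\sum x_i=0$, we have $Ax=0$ because the only neighbours of $\mathcal{D}_1$ lie in $\mathcal{D}_3$, and each vertex there sees the sum of $x$ over $\mathcal{D}_1$, which is $0$. Hence $Lx=d_1x=(p-1)x$, giving $N_1-1$ independent eigenvectors.
\item[] For $x$ supported on $\mathcal{D}_2$ with zero sum, $Ax=(J-I)x=-x$ on the clique, and the cross-block terms vanish. Hence $Lx=(d_2+1)x=(p^2-1)x$, with multiplicity $N_2-1=p^2-p-1$.
\item[] For $x$ supported on $\mathcal{D}_3$ with zero sum, similarly $Lx=(d_3+1)x=(p^3-1)x$, with multiplicity $p-2$.
\end{itemize}

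\textbf{Step 2: the block-constant eigenvalues.} Vectors constant on each tier are governed by the $3\times3$ Laplacian quotient matrix
\[
Q_L=\begin{pmatrix} p-1 & 0 & -(p-1)\\ 0 & p-1 & -(p-1)\\ -N_1 & -N_2 & N_1+N_2\end{pmatrix},
\]
whose rows sum to zero. Its eigenvalues can be read off directly:
\begin{itemize}
\item[] $0$, with eigenvector $\mathbf 1$;
\item[] $n=p^3-1$, with eigenvector equal to $1$ on $\mathcal{D}_1\cup\mathcal{D}_2$ and $-(N_1+N_2)/N_3$ on $\mathcal{D}_3$, using the standard join eigenvector;
\item[] $p-1$, with eigenvector equal to $N_2$ on $\mathcal{D}_1$, $-N_1$ on $\mathcal{D}_2$ and $0$ on $\mathcal{D}_3$. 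This is checked directly: the $\mathcal{D}_3$-row gives $-N_1N_2+N_2N_1=0$.
\end{itemize}
Alternatively, the trace of $Q_L$ is $2(p-1)+N_1+N_2=p^3+p-2=0+(p-1)+(p^3-1)$, which confirms the third eigenvalue once the first two are known.

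\textbf{Step 3: assembling the spectrum.} Combining the two steps gives:
\begin{itemize}
\item[] multiplicity of $p-1$: $(N_1-1)+1=p^3-p^2$;
\item[] multiplicity of $p^2-1$: $p^2-p-1$;
\item[] multiplicity of $p^3-1$: $(p-2)+1=p-1$;
\item[] multiplicity of $0$: $1$.
\end{itemize}
The eigenvectors from Step 1 are orthogonal to all block-constant vectors and are independent across tiers, so the total is $1+(p^3-p^2)+(p^2-p-1)+(p-1)=p^3-1=n$. The list is therefore complete.

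As a cross-check, I would also invoke the known join formula for Laplacian spectra, applied to $K_{p-1}\vee(K_{p^2-p}\,\dot\cup\,\overline K_{p^3-p^2})$, which yields the same list. A trace check is also available: $\operatorname{tr}L=2|E(\Gamma(R))|$, using $|E|$ from the preliminaries.

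The only step needing care is Step 2. One must make sure the block-constant eigenvector for $p-1$ is genuinely new, i.e. not already counted among the zero-sum vectors, and that no eigenvalue is double-counted between the quotient and the orthogonal complement. Using the three-tier partition rather than the four-cell $\mathcal{P}$ avoids the extra quotient eigenvalue coming from the $V_2/V_4$ split. If $\mathcal{P}$ is used instead, that eigenvalue should reappear as $p^2-1$, by the same mechanism that produced the extra $-1$ in the adjacency case.
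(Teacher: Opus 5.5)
Your proof is correct, and it uses the same method as the paper but with a coarser partition. The paper reuses the four-cell partition $\{V_1,V_2,V_3,V_4\}$ from its adjacency analysis, which splits $\mathcal{D}_2=V_2\cup V_4$. The zero-sum vectors on $V_2$ and $V_4$ then supply only $(p-2)+(p^2-2p)=p^2-p-2$ eigenvectors for $p^2-1$. The missing one comes back as an eigenvalue of the $4\times 4$ quotient $Q_L$, with cell-constant eigenvector $(0,\,p-1,\,0,\,-1)$, exactly as you predicted.

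You instead use the equitable partition $\{\mathcal{D}_1,\mathcal{D}_2,\mathcal{D}_3\}$, i.e.\ the join structure $K_{p-1}\vee\bigl(K_{p^2-p}\,\dot\cup\,\overline{K}_{p^3-p^2}\bigr)$. With this choice, the zero-sum space on $\mathcal{D}_2$ already has its full dimension $p^2-p-1$. The quotient also shrinks to a $3\times 3$ matrix whose eigenvalues $0$, $p-1$, $p^3-1$ you certify with explicit eigenvectors.

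Your route gives a smaller quotient whose eigenvalues are actually verified; the paper only asserts that the eigenvalues of $Q_L$ are ``easy to observe''. It also gives independent cross-checks via the join formula and the trace identity $\operatorname{tr}L=2|E(\Gamma(R))|$. The paper's route gains uniformity with its adjacency theorem. Even there the $V_2/V_4$ split is unnecessary, since the $3\times 3$ adjacency quotient on $\{\mathcal{D}_1,\mathcal{D}_2,\mathcal{D}_3\}$ already has characteristic polynomial $P_3(\lambda)$.
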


\begin{proof}


First, let $x \in \mathbb{R}^n$ be supported entirely on $V_1$ such that $\mathbf{1}_{n_1}^T x_{V_1} = 0$. Because $V_1$ induces an empty subgraph and couples exclusively to $V_3$ via the all-ones block $J_{n_3 \times n_1}$, the zero-sum condition gives $A(\Gamma(R))x = 0$. Furthermore, $D(\Gamma(R))x = (p-1)x$, which yields $L(\Gamma(R))x = (p-1)x$. This contributes an eigenspace for the eigenvalue $ p-1$ of dimension $p^2(p-1) - 1$.

Next, for each cell $V_k \in \{V_2, V_4\}$, every vertex has degree $p^2 - 2$ and internal adjacency matrix $J_{n_k} - I_{n_k}$. For any non-zero vector $w \in \mathbb{R}^n$ supported on $V_k$ satisfying $\mathbf{1}_{n_k}^T w_{V_k} = 0$, the internal adjacency operator acts as $(J_{n_k} - I_{n_k})w_{V_k} = -w_{V_k}$. Because all off-diagonal blocks linking $V_k$ to the other cells are either zero or complete all-ones matrices, the zero-sum condition ensures that all cross-block interactions vanish identically: $A(\Gamma(R))w = -w$. Consequently,
$L(\Gamma(R))w = \left((p^2 - 2) - (-1)\right)w = (p^2 - 1)w$.

This yields $ p - 2$ linearly independent eigenvectors localized on $V_2$ and $(p-1)^2 - 1 = p^2 - 2p$ linearly independent eigenvectors localized on $V_4$. Summing these internal contributions gives a subspace of dimension $(p - 2) + (p^2 - 2p) = p^2 - p - 2$, 
associated with the eigenvalue $ p^2 - 1$.

Similarly, on the dominating clique $V_3$, every vertex has degree $p^3 - 2$. For any vector $z \in \mathbb{R}^n$ supported entirely on $V_3$ with $\mathbf{1}_{n_3}^T z_{V_3} = 0$, all external interactions vanish, and the internal adjacency operator acts as $-z$. Thus  $L(\Gamma(R))z = \left((p^3 - 2) - (-1)\right)z = (p^3 - 1)z$,
contributing an internal eigenspace of dimension $n_3 - 1 = p - 2$ for the eigenvalue $\mu = p^3 - 1$.

The remaining eigenvalues arise from eigenvectors that take constant values on each block of the partition. These correspond to the spectrum of the equitable quotient Laplacian matrix:
\begin{equation*}
Q_L = (p-1)\begin{pmatrix}
1 & 0 & -1 & 0 \\
0 & p & -1 & -(p-1) \\
-p^2 & -1 & p(p+1) & -(p-1) \\
0 & -1 & -1 & 2
\end{pmatrix}.
\end{equation*}
It is  easy to observe that   eigenvalues of $Q_L$ are given by  $0, p-1, p^2 - 1, p^3 - 1$, each with multiplicity one.

Summing the dimensions of the block-orthogonal zero-sum eigenspaces with the  eigenvalues  of $Q_L$ yields the total algebraic multiplicities: $m(0) = 1$, $m(p-1) = (p^2(p-1) - 1) + 1 = p^3 - p^2$, $m(p^2 - 1) = (p^2 - p - 2) + 1 = p^2 - p - 1$, and $m(p^3 - 1) = (p - 2) + 1 = p - 1$.
Therefore, the distinct eigenvalues and their corresponding algebraic multiplicities are given by:
\begin{equation*}
\mathrm{Spec}_L(\Gamma(R)) = \begin{pmatrix}
0 & p-1 & p^2 - 1 & p^3 - 1 \\
1 & p^3 - p^2 & p^2 - p - 1 & p-1
\end{pmatrix}.
\end{equation*}
The total multiplicity is $1 + (p^3 - p^2) + (p^2 - p - 1) + (p - 1) = p^3 - 1 = |V(\Gamma(R))|$, this  completes the proof.
\end{proof}

\begin{theorem}
\label{thm:laplacian_energy}
Let $p$ be an odd prime. The Laplacian energy $LE(\Gamma(R))$ of the zero-divisor graph $\Gamma(R)$ is given explicitly by:
\begin{equation}
LE(\Gamma(R)) = (p^3 - 2p^2 + 3)\overline{d} + p^4 - 3p^2 + 2,
\end{equation}
where $\overline{d} = \frac{2|E(\Gamma(R))|}{|V(\Gamma(R))|} = \frac{3p^4 - 4p^3 - p^2 + 2}{p^3 - 1}$ denotes the average vertex degree of $\Gamma(R)$.
\end{theorem}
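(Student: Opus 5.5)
The plan is to compute $LE(\Gamma(R))=\sum_{i=1}^{n}|\mu_i-\overline{d}|$ directly from the integral Laplacian spectrum in Theorem~\ref{completely_integral}. Since there are only four distinct eigenvalues $0,\ p-1,\ p^2-1,\ p^3-1$, the sum collapses to four terms, each weighted by its multiplicity. The only real work is to fix the sign of each difference $\mu-\overline{d}$, that is, to locate $\overline{d}=\frac{3p^4-4p^3-p^2+2}{p^3-1}$ relative to the eigenvalues.

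\textbf{Step 1 (localizing $\overline{d}$).} This is the step I expect to be the main obstacle, although it is elementary. I will show $p-1<\overline{d}<p^2-1$ for every odd prime $p$. Clearing the positive denominator $p^3-1$, the two inequalities become polynomial inequalities.
\begin{itemize}
\item The lower bound is equivalent to
\[
3p^4-4p^3-p^2+2-(p-1)(p^3-1)=2p^4-3p^3-p^2+p+1>0 .
\]
This holds because $2p^4-3p^3=p^3(2p-3)\ge 3p^3>p^2$ for $p\ge 3$.
\item The upper bound is equivalent to
\[
(p^2-1)(p^3-1)-(3p^4-4p^3-p^2+2)=p^5-3p^4+3p^3-1=p^3(p^2-3p+3)-1>0 .
\]
This holds since $p^2-3p+3\ge 3$ for $p\ge 3$.
\end{itemize}
As a sanity check, at $p=3$ one gets $\overline{d}=128/26\approx 4.92\in(2,8)$.

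\textbf{Step 2 (splitting the absolute values).} With $\overline{d}$ localized, the eigenvalues $0$ and $p-1$ lie below $\overline{d}$, while $p^2-1$ and $p^3-1$ lie above it. Hence
\[
LE=\overline{d}+(p^3-p^2)(\overline{d}-p+1)+(p^2-p-1)(p^2-1-\overline{d})+(p-1)(p^3-1-\overline{d}).
\]

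\textbf{Step 3 (collecting terms).} I will collect the coefficient of $\overline{d}$ and the constant term separately.
\begin{itemize}
\item The coefficient of $\overline{d}$ is $1+(p^3-p^2)-(p^2-p-1)-(p-1)=p^3-2p^2+3$.
\item The constant term is $-(p-1)(p^3-p^2)+(p^2-1)(p^2-p-1)+(p^3-1)(p-1)$. Expanding gives $-(p^4-2p^3+p^2)+(p^4-p^3-2p^2+p+1)+(p^4-p^3-p+1)$, which simplifies to $p^4-3p^2+2$.
\end{itemize}
Together these yield the claimed formula $LE(\Gamma(R))=(p^3-2p^2+3)\overline{d}+p^4-3p^2+2$.
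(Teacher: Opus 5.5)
Your proposal is correct and follows essentially the same route as the paper: clear the denominator $p^3-1$ to show $p-1<\overline{d}<p^2-1$, split the absolute values over the four Laplacian eigenvalues from Theorem~\ref{completely_integral}, and collect the $\overline{d}$-coefficient and the constant term separately. Your positivity arguments for $2p^4-3p^3-p^2+p+1$ and for $p^5-3p^4+3p^3-1=p^3(p^2-3p+3)-1$ are slightly more explicit than the paper's, which writes the second quantity as $(p-1)\left[p^2(p-1)^2+p+1\right]$ and simply asserts that both are positive.
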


\begin{proof}
By definition, the Laplacian energy of a simple graph of order $n = p^3 - 1$ is:
\begin{equation*}
LE(\Gamma(R)) = \sum_{i=0}^{n-1} |\mu_i - \overline{d}|,
\end{equation*}
where the non-decreasing Laplacian eigenvalues are $\mu_0 = 0$, $\mu_1 = p - 1$, $\mu_2 = p^2 - 1$, and $\mu_3 = p^3 - 1$, with algebraic multiplicities $1$, $p^3 - p^2$, $p^2 - p - 1$, and $p - 1$, respectively.

For any odd prime $p \ge 3$, direct subtraction yields:
\begin{align*}
\overline{d} - (p - 1) &= \frac{(3p^4 - 4p^3 - p^2 + 2) - (p - 1)(p^3 - 1)}{p^3 - 1} = \frac{2p^4 - 3p^3 - p^2 + p + 1}{p^3 - 1} > 0,\\
\textrm{and} \; (p^2 - 1) - \overline{d} &= \frac{(p^2 - 1)(p^3 - 1) - (3p^4 - 4p^3 - p^2 + 2)}{p^3 - 1}\\ 
&= \frac{(p - 1)\left[p^2(p - 1)^2 + p + 1\right]}{p^3 - 1} > 0.
\end{align*}
Consequently,  $\mu_0 < \mu_1 < \overline{d} < \mu_2 < \mu_3$.
Expanding the sum of absolute deviations according to the spectral multiplicities yields:
\begin{align*}
LE(\Gamma(R)) &= |\mu_0 - \overline{d}| + (p^3 - p^2)|\mu_1 - \overline{d}| + (p^2 - p - 1)|\mu_2 - \overline{d}| + (p - 1)|\mu_3 - \overline{d}| \\
&= \overline{d} + (p^3 - p^2)(\overline{d} - (p - 1)) + (p^2 - p - 1)((p^2 - 1) - \overline{d}) + (p - 1)((p^3 - 1) - \overline{d}) \\
&= \left[1 + (p^3 - p^2) - (p^2 - p - 1) - (p - 1)\right]\overline{d} \\
&\quad - (p^3 - p^2)(p - 1) + (p^2 - p - 1)(p^2 - 1) + (p - 1)(p^3 - 1) \\
&= (p^3 - 2p^2 + 3)\overline{d} + (-p^4 + 2p^3 - p^2) + (p^4 - p^3 - 2p^2 + p + 1) + (p^4 - p^3 - p + 1) \\
&= (p^3 - 2p^2 + 3)\overline{d} + p^4 - 3p^2 + 2,.
\end{align*}
This completes the proof.
\end{proof}

\section{Network Systems Applications: Robustness, Distributed Consensus, and Transport Dynamics}\label{Sec4}

Beyond pure algebraic classification, the structural and spectral determinations obtained in the preceding sections directly dictate the operational performance of $\Gamma(R)$ when deployed as a communication or processing topology. In distributed systems engineering, the efficacy of an interconnection architecture is governed by three fundamental, interdependent criteria: structural reliability under link failure, dynamical convergence in distributed consensus protocols, and diffusive transport efficiency.

In this section, we treat $\Gamma(R)$ as a multi-tier communication network interconnecting $n = p^3 - 1$ computational processing agents. By leveraging the 4-cell equitable partition $V(\Gamma(R)) = V_1 \cup V_2 \cup V_3 \cup V_4$ and the complete integer Laplacian spectrum $\mathrm{Spec}_L(\Gamma(R))$, we derive exact performance laws, provide explicit asymptotic bounds, and evaluate concrete numerical scenarios for real-world distributed architectures.

\subsection{Structural Reliability, Operational Tree Density, and Fault Tolerance}\label{Sec4_sub1}

Beyond the absolute tree count, a critical scale-free metric in network reliability is the \textit{asymptotic tree entropy} (also termed the tree complexity per node), defined by $z(\Gamma(R)) = \frac{1}{n} \ln \tau(\Gamma(R)), \quad n = p^3 - 1$.
This index measures the uniform information capacity and redundancy available per network node.


\begin{theorem}
Let $p$ be an odd prime. The network complexity $\tau(\Gamma(R))$ and its asymptotic tree entropy $z(\Gamma(R))$ are given by:
\begin{align}
\tau(\Gamma(R)) &= \frac{1}{p^3 - 1}(p - 1)^{p^3 - p^2}(p^2 - 1)^{p^2 - p - 1}(p^3 - 1)^{p - 1}, \label{eq:spanning_trees} \\
z(\Gamma(R)) &= \frac{p^3 - p^2}{p^3 - 1}\ln(p - 1) + \frac{p^2 - p - 1}{p^3 - 1}\ln(p^2 - 1) + \frac{p - 2}{p^3 - 1}\ln(p^3 - 1). \label{eq:tree_entropy}
\end{align}
In the large-scale asymptotic limit ($p \to \infty$), the entropy obeys the expansion:
\begin{equation*}
z(\Gamma(R)) = \ln p - \frac{1}{p} + \frac{\ln p}{p} + \mathcal{O}\left(\frac{\ln p}{p^2}\right).
\end{equation*}
\end{theorem}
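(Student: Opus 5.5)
The plan is to apply the Matrix-Tree Theorem, in the form $\tau(\Gamma)=\frac1n\prod_{i=1}^{n-1}\mu_i$ recalled in the preliminaries, directly to the integral Laplacian spectrum of Theorem~\ref{completely_integral}. The eigenvalue $0$ is simple and is the one omitted from the product. The remaining eigenvalues are $p-1$, $p^2-1$ and $p^3-1$, with multiplicities $p^3-p^2$, $p^2-p-1$ and $p-1$. Their product, divided by $n=p^3-1$, is exactly \eqref{eq:spanning_trees}, so no further computation is needed for $\tau(\Gamma(R))$.

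For the entropy I will take logarithms and divide by $n=p^3-1$. The prefactor $\frac1n$ contributes $-\frac{1}{n}\ln(p^3-1)$, which cancels one of the $p-1$ factors of $p^3-1$. This is why the last coefficient in \eqref{eq:tree_entropy} is $\frac{p-2}{p^3-1}$ rather than $\frac{p-1}{p^3-1}$, and it gives \eqref{eq:tree_entropy} verbatim.

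For the asymptotic expansion I will treat the three terms of \eqref{eq:tree_entropy} separately, keeping track only of what survives at order $\frac{\ln p}{p^2}$.

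\emph{First term.} Write the coefficient as
\[
\frac{p^3-p^2}{p^3-1}=1-\frac{p+1}{p^2+p+1}=1-\frac1p+\mathcal{O}(p^{-3}),
\]
and expand $\ln(p-1)=\ln p-\frac1p+\mathcal{O}(p^{-2})$. Their product is
\[
\ln p-\frac1p-\frac{\ln p}{p}+\mathcal{O}\!\left(\frac{\ln p}{p^2}\right).
\]

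\emph{Second term.} The coefficient satisfies $\frac{p^2-p-1}{p^3-1}=\frac1p+\mathcal{O}(p^{-2})$, and $\ln(p^2-1)=2\ln p+\mathcal{O}(p^{-2})$. This term therefore contributes $\frac{2\ln p}{p}+\mathcal{O}\!\left(\frac{\ln p}{p^2}\right)$.

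\emph{Third term.} Since $\frac{p-2}{p^3-1}=\mathcal{O}(p^{-2})$ and $\ln(p^3-1)=\mathcal{O}(\ln p)$, this term is entirely absorbed in $\mathcal{O}\!\left(\frac{\ln p}{p^2}\right)$.

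Adding the three contributions, the $-\frac{\ln p}{p}$ from the first term and the $+\frac{2\ln p}{p}$ from the second combine to the stated $+\frac{\ln p}{p}$. This gives $z(\Gamma(R))=\ln p-\frac1p+\frac{\ln p}{p}+\mathcal{O}\!\left(\frac{\ln p}{p^2}\right)$.

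The only delicate point is the bookkeeping in this expansion. Cross terms such as $\left(-\frac1p\right)\cdot\left(-\frac1p\right)$ from the first term are only $\mathcal{O}(p^{-2})$, so they fall inside the error. The genuine $\frac{\ln p}{p}$ contributions come from exactly two places: the $-\frac1p$ correction in the first coefficient multiplied by $\ln p$, and the leading part of the second term. I will check the coefficient expansions exactly, via the factorisations $p^3-1=(p-1)(p^2+p+1)$ and $p^2-1=(p-1)(p+1)$, so that no order-$\frac{\ln p}{p}$ term is lost.
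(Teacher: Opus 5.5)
Your proposal is correct and follows the paper's proof essentially step for step. Both apply the Matrix-Tree Theorem to the integral Laplacian spectrum, take logarithms so that the $\frac{1}{n}$ prefactor absorbs one copy of $\ln(p^3-1)$, and expand the first two terms as $(1-\frac1p)(\ln p-\frac1p)$ and $\frac{2\ln p}{p}$, with the third term absorbed into $\mathcal{O}(\frac{\ln p}{p^2})$. Your error bookkeeping, such as the exact identity $\frac{p^3-p^2}{p^3-1}=1-\frac{p+1}{p^2+p+1}$, is more explicit than the paper's but does not change the argument.
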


\begin{proof}
By the Matrix-Tree Theorem, the complexity of any finite, connected graph is given by the product of its non-zero Laplacian eigenvalues divided by its order:
\begin{equation*}
\tau(\Gamma(R)) = \frac{1}{n}\prod_{i=1}^{n-1}\mu_i = \frac{1}{p^3 - 1} \mu_1^{\mathrm{mult}(\mu_1)}\mu_2^{\mathrm{mult}(\mu_2)}\mu_3^{\mathrm{mult}(\mu_3)}.
\end{equation*}
Substituting $\mu_1 = p - 1$, $\mu_2 = p^2 - 1$, and $\mu_3 = p^3 - 1$ along with their respective multiplicities $p^3 - p^2$, $p^2 - p - 1$, and $p - 1$ yields \eqref{eq:spanning_trees}. From the definition of $z(\Gamma(R))$ we have
\begin{align*}
z(\Gamma(R)) &= \frac{1}{p^3 - 1}\left[ (p^3 - p^2)\ln(p - 1) + (p^2 - p - 1)\ln(p^2 - 1) + (p - 1)\ln(p^3 - 1) - \ln(p^3 - 1) \right] \\
&= \frac{p^3 - p^2}{p^3 - 1}\ln(p - 1) + \frac{p^2 - p - 1}{p^3 - 1}\ln(p^2 - 1) + \frac{p - 2}{p^3 - 1}\ln(p^3 - 1)\\
 &= \left(1 - \frac{1}{p}\right)\left(\ln p - \frac{1}{p}\right) + \frac{1}{p}(2\ln p) + \mathcal{O}\left(\frac{\ln p}{p^2}\right) \\
&= \ln p - \frac{1}{p} - \frac{\ln p}{p} + \frac{2\ln p}{p} + \mathcal{O}\left(\frac{\ln p}{p^2}\right)\\
&= \ln p - \frac{1}{p} + \frac{\ln p}{p} + \mathcal{O}\left(\frac{\ln p}{p^2}\right).
\end{align*} 
Hence \eqref{eq:tree_entropy} follows.
\end{proof}

Now consider an operational scenario where each communication edge in $\Gamma(R)$ remains active independently with probability $q \in (0, 1)$ and fails with probability $1 - q$. The \textit{all-terminal reliability} $R(\Gamma(R); q)$ is the probability that the operational edges contain at least one spanning tree, ensuring complete network communication.

\begin{proposition}
Let each edge of $\Gamma(R)$ fail independently with probability $1 - q$. For sufficiently small failure rates ($1 - q \ll 1$), the all-terminal unreliability $U(\Gamma(R); q) = 1 - R(\Gamma(R); q)$ satisfies:
\begin{equation*}
U(\Gamma(R); q) \le p^2(p - 1)(1 - q)^{p - 1} + \mathcal{O}\left((1 - q)^p\right).
\end{equation*}
\end{proposition}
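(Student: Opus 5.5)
The plan is a union bound over minimal edge cuts, sorted by size. Let $\varepsilon = 1-q$. The network is disconnected exactly when every edge of some minimal edge cut fails, so
\[
U(\Gamma(R);q) \le \sum_{C \text{ minimal cut}} \varepsilon^{|C|}.
\]
I would split this sum into cuts of size $p-1$ and cuts of size at least $p$.

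The first step is to find the edge connectivity and the cuts that attain it. From the degree classification, $\delta(\Gamma(R)) = d_1 = p-1$, attained exactly on $V_1 = \mathcal{D}_1$. Each $v \in V_1$ has neighborhood $N(v) = V_3$, so its star cut $\partial(\{v\})$ has exactly $p-1$ edges. This gives $|V_1| = p^2(p-1)$ cuts of size $p-1$, and they contribute the leading term $p^2(p-1)\varepsilon^{p-1}$.

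The second step, which I expect to be the main obstacle, is to show that every other edge cut $\partial(S)$ has size at least $p$, with $S$ nonempty and $|S| \le n/2$. There are three cases.
\begin{itemize}
\item If $S$ contains a vertex $w \in V_3$, then $w$ is universal and sends at least $n - |S| \ge n/2$ edges out of $S$. This is far more than $p$.
\item Otherwise $S \cap V_3 = \emptyset$. Each $x \in S$ is adjacent to all of $V_3$, which lies outside $S$, so $|\partial(S)| \ge |S|(p-1)$. When $|S| \ge 2$ this is at least $2(p-1) \ge p$.
\item The only remaining case is $|S| = 1$, where $\partial(S)$ is a star. If the vertex lies in $\mathcal{D}_2$ the star has $p^2 - 2 \ge p$ edges, and if it lies in $V_1$ it is one of the cuts already counted.
\end{itemize}
So the minimum-cut family is exactly the $p^2(p-1)$ peripheral stars.

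The third step controls the remainder by counting larger cuts. Let $c_k$ be the number of cuts of size $k \ge p$. I need $\sum_{k \ge p} c_k \varepsilon^k = \mathcal{O}(\varepsilon^p)$ as $\varepsilon \to 0$ with $p$ held fixed. This holds trivially because the graph is finite: there are at most $2^{n}$ cuts, each of size at least $p$, so the whole sum is bounded by $2^n \varepsilon^p$.

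Combining the three steps gives $U \le p^2(p-1)\varepsilon^{p-1} + \mathcal{O}(\varepsilon^p)$. Inclusion–exclusion shows that the leading coefficient is sharp, since any two distinct peripheral stars share the edges into $V_3$ and their joint failure event has order at least $\varepsilon^{p}$. Only the upper bound is needed here, though. The implied constant in $\mathcal{O}$ depends on $p$, and I would state this explicitly.
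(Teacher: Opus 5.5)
Your proof is correct and follows the same route as the paper's: a union bound over edge cuts, in which the $p^2(p-1)$ peripheral stars give the leading term $p^2(p-1)(1-q)^{p-1}$ and every other cut has at least $p$ edges, so contributes $\mathcal{O}((1-q)^p)$. Your case analysis is more careful than the paper's, which claims all other cuts need at least $p^2-2$ failures and so overlooks cuts such as $\partial(\{v,v'\})$ with $v,v'\in V_1$, of size $2(p-1)$; the only slip is in your optional sharpness remark, since distinct peripheral stars are edge-disjoint (they share endpoints in $V_3$, not edges), so their joint failure has probability exactly $(1-q)^{2(p-1)}$, which still gives your conclusion.
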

\begin{proof}
By the union bound on minimum edge cuts, the leading contribution to network disconnection occurs when all edges incident to a node of minimum degree fail simultaneously. In $\Gamma(R)$, the minimum vertex degree is $\delta(\Gamma(R)) = d_1 = p - 1$, which occurs exclusively at vertices belonging to the independent set $V_1$. 

Since $|V_1| = p^2(p - 1)$ and each vertex in $V_1$ has $p - 1$ disjoint incident edges terminating in $V_3$, the probability that an arbitrary vertex in $V_1$ becomes isolated is $(1 - q)^{p - 1}$. Summing over all $n_1$ peripheral vertices provides the dominant failure term, as all other cut-sets require at least $d_2 = p^2 - 2 > p - 1$ edge failures.
\end{proof}

\begin{example} 
Unlike bounded-degree paths ($P_n$, where $\overline{r} \sim n/3 \to \infty$) or grids ($m \times m$, where $\overline{r} \sim \ln n \to \infty$) where effective resistance diverges with network size, the zero-divisor network $\Gamma(R)$ exhibits vanishing resistance:
\begin{equation*}
\overline{r}(\Gamma(R)) \sim \frac{2}{p} \to 0 \quad \text{as } p \to \infty.
\end{equation*}
Although the peripheral independent set $V_1$ contains nearly all vertices ($|V_1|/n \approx 1 - 1/p$), every peripheral vertex connects directly to the core $V_3 = \mathcal{D}_3$. Thus, $V_3$ operates as a low-impedance crossbar that electrically shorts distant vertices to near-zero resistance, precluding resistive bottlenecks across the network.
\end{example}

\begin{theorem}
\label{thm:kirchhoff_index}
Let $p$ be an odd prime. The Kirchhoff index $Kf(\Gamma(R))$ of the zero-divisor graph $\Gamma(R)$ is given in closed form by:
\begin{equation}
Kf(\Gamma(R)) = (p^3 - 1)\left( p^2 + 1 - \frac{p}{p^2 - 1} + \frac{p - 1}{p^3 - 1} \right).
\end{equation}
Furthermore, the average two-point resistance distance satisfies:
\begin{equation}
\overline{r}(\Gamma(R)) = \frac{2}{p^3 - 2}\left( p^2 + 1 - \frac{p}{p^2 - 1} + \frac{p - 1}{p^3 - 1} \right) = \frac{2}{p} + \frac{2}{p^3} - \frac{2}{p^4} + \mathcal{O}\left(\frac{1}{p^5}\right),
\end{equation}
with asymptotic limit $\lim_{p \to \infty} \overline{r}(\Gamma(R)) = 0$.
\end{theorem}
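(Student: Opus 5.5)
The plan is to apply the Xiao--Gutman spectral formula \eqref{eq:kirchhoff_def}, $Kf(\Gamma)=n\sum_{i=1}^{n-1}\mu_i^{-1}$, with $n=p^3-1$. Since Theorem~\ref{completely_integral} gives the whole Laplacian spectrum of $\Gamma(R)$ in integral form, the reciprocal sum reduces to three terms, one per nonzero eigenvalue, each weighted by its multiplicity:
\begin{equation*}
\sum_{i=1}^{n-1}\frac{1}{\mu_i}=\frac{p^3-p^2}{p-1}+\frac{p^2-p-1}{p^2-1}+\frac{p-1}{p^3-1}.
\end{equation*}

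\textbf{Simplifying the three terms.} The first term is $p^2(p-1)/(p-1)=p^2$. For the second, write $p^2-p-1=(p^2-1)-p$, so it equals $1-\frac{p}{p^2-1}$. The third term is left as it is. Adding these and multiplying by $n=p^3-1$ gives the stated closed form for $Kf(\Gamma(R))$.

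\textbf{Average resistance.} By definition, $\overline{r}=\frac{2}{n(n-1)}Kf$. The factor $n=p^3-1$ cancels against the one in $Kf$, and $n-1=p^3-2$. This yields
\begin{equation*}
\overline{r}(\Gamma(R))=\frac{2}{p^3-2}\left(p^2+1-\frac{p}{p^2-1}+\frac{p-1}{p^3-1}\right).
\end{equation*}

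\textbf{Asymptotic expansion.} This is the only step needing care, because the terms must be tracked exactly through order $p^{-4}$. I would expand each piece as a geometric series in $1/p$:
\begin{itemize}
\item $\frac{p}{p^2-1}=\frac1p+\frac1{p^3}+\mathcal{O}(p^{-5})$;
\item $\frac{p-1}{p^3-1}=\frac1{p^2}-\frac1{p^3}+\mathcal{O}(p^{-5})$.
\end{itemize}
The bracket therefore equals $p^2+1-\frac1p+\frac1{p^2}-\frac2{p^3}+\mathcal{O}(p^{-4})$. The prefactor is $\frac{2}{p^3-2}=\frac{2}{p^3}\bigl(1+\frac{2}{p^3}+\mathcal{O}(p^{-6})\bigr)$. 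Multiplying, the leading bracket term $p^2$ gives $\frac2p$, the term $1$ gives $\frac2{p^3}$, and the term $-\frac1p$ gives $-\frac2{p^4}$. The correction $\frac{2}{p^3}$ in the prefactor, acting on $p^2$, contributes only $\frac{4}{p^4}\cdot\frac1p=\mathcal{O}(p^{-5})$, and every other cross term is also $\mathcal{O}(p^{-5})$. This gives the stated expansion, and $\lim_{p\to\infty}\overline{r}=0$ follows immediately.
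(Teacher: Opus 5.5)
Your closed forms for $Kf(\Gamma(R))$ and $\overline{r}(\Gamma(R))$ follow the paper's route and are correct. Your series for $\frac{p}{p^2-1}$, $\frac{p-1}{p^3-1}$, the bracket, and the prefactor $\frac{2}{p^3-2}$ are also correct, and $\overline{r}\to 0$ follows. The error is in the cross term you dismissed. The prefactor correction acting on $p^2$ gives
\[
\frac{2}{p^3}\cdot\frac{2}{p^3}\cdot p^2=\frac{4}{p^4},
\]
not $\frac{4}{p^4}\cdot\frac{1}{p}$. This is the same order as the $-\frac{2}{p^4}$ produced by the $-\frac1p$ term, so the $p^{-4}$ coefficient is $-2+4=+2$. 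Your own expansions therefore give
\[
\overline{r}(\Gamma(R))=\frac{2}{p}+\frac{2}{p^3}+\frac{2}{p^4}+\mathcal{O}\left(\frac{1}{p^5}\right).
\]
The stated expansion with $-\frac{2}{p^4}$ cannot be proved, because it is false. At $p=11$ the exact value $\frac{2}{1329}\bigl(122-\frac{11}{120}+\frac{10}{1330}\bigr)\approx 0.18347$, which agrees with the paper's table. This matches $\frac2p+\frac2{p^3}+\frac2{p^4}\approx 0.18346$. By contrast, $\frac2p+\frac2{p^3}-\frac2{p^4}\approx 0.18318$ misses by roughly $\frac{4}{p^4}$.

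The paper's proof makes the same slip. It writes the prefactor as $\frac{2}{p^3}+\mathcal{O}(p^{-6})$ and multiplies by a bracket whose leading term is $p^2$. In doing so it discards $\mathcal{O}(p^{-6})\cdot p^2=\mathcal{O}(p^{-4})$, which is not absorbed into $\mathcal{O}(p^{-5})$. Your method is fine; only the conclusion needs to change. Keep the $\frac{4}{p^4}$ contribution and correct the sign of the $p^{-4}$ term. The leading behaviour $\overline{r}\sim\frac{2}{p}$, the $\frac{2}{p^3}$ term, the closed forms, and the limit are unaffected.
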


\begin{proof}
By the spectral representation of Klein and Randi\'{c} \cite{KleinRandic1993, XiaoGutman2003}, the Kirchhoff index of a connected graph on $n$ vertices equals $n$ times the sum of the reciprocals of its non-zero Laplacian eigenvalues:
\begin{equation*}
Kf(\Gamma(R)) = n \sum_{i=1}^{n-1} \frac{1}{\mu_i}.
\end{equation*}
Substituting $n = p^3 - 1$ and the Laplacian spectrum from Theorem~\ref{completely_integral}-namely, the non-zero eigenvalues $\mu_1 = p - 1$, $\mu_2 = p^2 - 1$, and $\mu_3 = p^3 - 1$ with respective multiplicities $p^3 - p^2$, $p^2 - p - 1$, and $p - 1$ yields
\begin{align*}
Kf(\Gamma(R)) &= (p^3 - 1)\left[ \frac{p^3 - p^2}{p - 1} + \frac{p^2 - p - 1}{p^2 - 1} + \frac{p - 1}{p^3 - 1} \right] \\
&= (p^3 - 1)\left[ p^2 + \frac{(p^2 - 1) - p}{p^2 - 1} + \frac{p - 1}{p^3 - 1} \right] \\
&= (p^3 - 1)\left[ p^2 + 1 - \frac{p}{p^2 - 1} + \frac{p - 1}{p^3 - 1} \right].
\end{align*}
The average two-point resistance distance is defined as $\overline{r}(\Gamma(R)) = \frac{2}{n(n - 1)} Kf(\Gamma(R))$. Substituting $n = p^3 - 1$ gives:
\begin{equation*}
\overline{r}(\Gamma(R)) = \frac{2}{(p^3 - 1)(p^3 - 2)} \cdot (p^3 - 1)\left( p^2 + 1 - \frac{p}{p^2 - 1} + \frac{p - 1}{p^3 - 1} \right) = \frac{2}{p^3 - 2} S(p),
\end{equation*}
where $S(p) = p^2 + 1 - \frac{p}{p^2 - 1} + \frac{p - 1}{p^3 - 1}$.  Then 
\begin{align*}
\overline{r}(\Gamma(R)) 
&= \left(\frac{2}{p^3} + \mathcal{O}\left(\frac{1}{p^6}\right)\right)\left(p^2 + 1 - \frac{1}{p} + \frac{1}{p^2} + \mathcal{O}\left(\frac{1}{p^3}\right)\right) \\
 &= \frac{2}{p} + \frac{2}{p^3} - \frac{2}{p^4} + \mathcal{O}\left(\frac{1}{p^5}\right),
\end{align*}
from which $\lim_{p \to \infty} \overline{r}(\Gamma(R)) = 0$ follows immediately.
\end{proof}

Unlike bounded-degree lattices, paths, or expanders where average effective resistance diverges or remains strictly bounded away from zero, $\Gamma(R)$ exhibits vanishing resistance $\overline{r} \sim \frac{2}{p} \to 0$. Although the peripheral independent set $V_1$ contains nearly all network vertices ($\vert{}V_1\vert{}/n \approx 1 - \frac{1}{p}$), every peripheral node connects directly to the core $V_3 = \mathcal{D}_3$. As a result, the dominating core functions as an ultra-low-impedance crossbar that electrically shorts distant vertices, guaranteeing negligible signal attenuation and delay across the entire network as $p$ grows.

\subsection{Distributed Consensus Protocols and Stochastic Noise Filtering}\label{Sec4_sub2}

Consider a network of $n = p^3 - 1$ autonomous agents communicating over the topology of $\Gamma(R)$. Each agent $i$ maintains a local continuous state $x_i(t) \in \mathbb{R}$ (such as clock synchronization time, velocity vector, or distributed sensor estimate).

Under standard nearest-neighbor linear coupling, the collective state vector $x(t) = (x_1(t), \dots, x_n(t))^T$ evolves according to:
\begin{equation}
\label{eq:consensus_dynamics}
\dot{x}(t) = -L(\Gamma(R))x(t), \quad x(0) = x_0.
\end{equation}
Because $\Gamma(R)$ is connected, the state vector converges exponentially to the average consensus state $x^* = \left(\frac{1}{n}\sum_{i=1}^n x_i(0)\right)\mathbf{1}_n$.

\begin{theorem}
The guaranteed exponential convergence rate of protocol \eqref{eq:consensus_dynamics} is strictly determined by the algebraic connectivity $\alpha(\Gamma(R)) = \mu_1 = p - 1$.
Furthermore, the network dynamics decouple into three distinct modal timescales:
\begin{equation*}
T_{\mathrm{core}} = \frac{1}{p^3 - 1}, \quad T_{\mathrm{intermediate}} = \frac{1}{p^2 - 1}, \quad T_{\mathrm{periph}} = \frac{1}{p - 1}.
\end{equation*}
\end{theorem}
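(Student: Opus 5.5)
The approach is to diagonalize the linear system \eqref{eq:consensus_dynamics} in an orthonormal eigenbasis of the symmetric matrix $L(\Gamma(R))$, using the integral spectrum from Theorem~\ref{completely_integral}. Since $L$ is real symmetric, we will write
\[
L = \sum_{\mu \in \{0,\,p-1,\,p^2-1,\,p^3-1\}} \mu\, \Pi_\mu ,
\]
where $\Pi_\mu$ is the orthogonal projector onto the $\mu$-eigenspace and $\Pi_0 = \frac{1}{n}\mathbf{1}_n\mathbf{1}_n^T$, because $\Gamma(R)$ is connected and $0$ is simple. Then
\[
x(t) = e^{-Lt}x_0 = x^* + e^{-(p-1)t}\Pi_{p-1}x_0 + e^{-(p^2-1)t}\Pi_{p^2-1}x_0 + e^{-(p^3-1)t}\Pi_{p^3-1}x_0 .
\]
The error $\delta(t)=x(t)-x^*$ is therefore a sum of three mutually orthogonal components, each decaying at a single exponential rate. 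This is the modal decoupling, and the time constants are the reciprocals $1/(p^3-1)$, $1/(p^2-1)$, $1/(p-1)$.

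For the convergence-rate claim, orthogonality gives
\[
\|\delta(t)\|^2=\sum_{\mu\neq 0} e^{-2\mu t}\|\Pi_\mu x_0\|^2 \le e^{-2(p-1)t}\|x_0-x^*\|^2 .
\]
Hence the decay rate is at least $\alpha(\Gamma(R))=p-1$. To show this rate is exact and cannot be improved, we pick $x_0$ in the $(p-1)$-eigenspace, for example a zero-sum vector supported on $V_1$, as constructed in the proof of Theorem~\ref{completely_integral}. For such an $x_0$ the error is exactly $e^{-(p-1)t}x_0$, so $p-1$ is the sharp uniform rate.

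To justify the labels core, intermediate and peripheral, we will recall the localization of the eigenvectors from that proof:
\begin{itemize}
\item[] the $(p^3-1)$-modes consist of the zero-sum vectors on $V_3$ together with one block-constant quotient vector;
\item[] the $(p^2-1)$-modes consist of the zero-sum vectors on $V_2$ and on $V_4$ (that is, on $\mathcal{D}_2$) plus one quotient vector;
\item[] the $(p-1)$-modes consist of the zero-sum vectors on $V_1$ plus one quotient vector.
\end{itemize}
For the quotient eigenvectors, we will either compute them from $Q_L$ directly or simply attach each one to its eigenvalue class. This does not affect the rates.

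The only mildly delicate point is the quotient part: making sure the three nonzero eigenvalues of $Q_L$ fall into the three classes and that $\Pi_0$ is spanned by $\mathbf{1}_n$. Both facts follow from Theorem~\ref{completely_integral} and connectivity, so the main obstacle is interpretive rather than technical.
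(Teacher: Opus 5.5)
Your proposal is correct and takes essentially the same route as the paper. Both expand the error in an orthonormal eigenbasis of $L(\Gamma(R))$, read off three decay channels with time constants $\mu_k^{-1}$, and bound the decay rate by $\mu_1 = p-1$; your orthogonality estimate is equivalent to the paper's Lyapunov inequality $\dot{V} \le -2\mu_1 V$. Your extra step, taking $x_0$ to be a zero-sum vector supported on $V_1$, is a real improvement: it shows the rate $p-1$ is attained and so justifies ``strictly determined,'' which the paper's one-sided bound alone does not establish.
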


\begin{proof}
Let $e(t) = x(t) - x^*$ denote the consensus error vector, which lies in the zero-sum subspace $\mathbf{1}_n^\perp$. The decay of the Lyapunov function $V(t) = \frac{1}{2}\|e(t)\|^2$ satisfies:
\begin{equation*}
\dot{V}(t) = -e(t)^T L(\Gamma(R)) e(t) \le -\mu_1 \|e(t)\|^2 = -2\mu_1 V(t),
\end{equation*}
where $\mu_1 = \lambda_2(L(\Gamma(R))) = p - 1$ is the algebraic connectivity. Integrating yields $\|e(t)\| \le \|e(0)\| e^{-(p - 1)t}$, establishing the convergence rate.

Because the non-zero spectrum of $L(\Gamma(R))$ is entirely discrete with eigenvalues $\mu_1 = p - 1$, $\mu_2 = p^2 - 1$, and $\mu_3 = p^3 - 1$, the modal matrix decouples the state trajectories into three orthogonal decay channels:
\begin{equation*}
e(t) = \sum_{k=1}^3 \exp(-\mu_k t) \sum_{j=1}^{\mathrm{mult}(\mu_k)} \langle e(0), v_{k, j} \rangle v_{k, j},
\end{equation*}
where $\{v_{k, j}\}$ are the orthonormal eigenvectors associated with $\mu_k$. The time constants $T_k = \mu_k^{-1}$ define the characteristic relaxation times of the corresponding subspaces.
\end{proof}

\begin{example} 
Consider a distributed multi-agent system consisting of $n = p^3 - 1 = 124$ interacting autonomous units configured over the zero-divisor network $\Gamma(R)$ for $p = 5$:
\begin{itemize}
    \item ($T_3 \approx 0.0081\,\mathrm{s}$): The $n_3 = p - 1 = 4$ core agents situated in cell $V_3$ interact via the maximum Laplacian eigenvalue $\mu_3 = p^3 - 1 = 124$, achieving local mutual synchronization in approximately $8\,\mathrm{ms}$.
    \item ($T_2 \approx 0.0417\,\mathrm{s}$): The $n_2 + n_4 = 20$ agents occupying the intermediate tiers $V_2 \cup V_4$ align with the core dynamics via the intermediate eigenvalue $\mu_2 = p^2 - 1 = 24$, relaxing within $42\,\mathrm{ms}$.
    \item ($T_1 = 0.2500\,\mathrm{s}$): The remaining $n_1 = p^2(p - 1) = 100$ peripheral agents in $V_1$ converge at the minimal rate governed by the algebraic connectivity $\alpha = \mu_1 = p - 1 = 4$, requiring roughly $250\,\mathrm{ms}$ to achieve global consensus.
\end{itemize}
This explicit timescale separation confirms that the core sub-network $V_3$ acts as a high-speed coordination backbone, insulating the global consensus process from peripheral communication latencies.
\end{example}

In physical multi-agent networks, communication links and agent sensors are subject to environmental thermal noise. We model this via the stochastic differential equation $dx(t) = -L(\Gamma(R))x(t)dt + dW(t)$,  where $W(t)$ is an $n$-dimensional standard Wiener process. The steady-state error covariance around the average consensus state is quantified by the \textit{first-order network $H_2$-norm}:  $H_2^2(\Gamma(R)) = \lim_{t \to \infty} \frac{1}{n} \mathbb{E}\left[ \|x(t) - \overline{x}(t)\mathbf{1}_n\|^2 \right] = \frac{1}{2n}\sum_{i=1}^{n-1}\frac{1}{\mu_i} = \frac{Kf(\Gamma(R))}{2n^2}$.

\begin{theorem}
\label{thm:h2_norm}
Consider the stochastic consensus dynamics on $\Gamma(R)$ subject to additive standard white noise $\mathrm{d}x(t) = -L(\Gamma(R))x(t)\,\mathrm{d}t + \mathrm{d}W(t), \quad x(0) = x_0$,
where $W(t)$ is an $n$-dimensional standard Wiener process with $n = p^3 - 1$. The steady-state error variance around the average consensus state is given in closed form by:
\begin{equation*}
H_2^2(\Gamma(R)) = \frac{1}{2(p^3 - 1)}\left[ p^2 + 1 - \frac{p}{p^2 - 1} + \frac{p - 1}{p^3 - 1} \right].
\end{equation*}
Furthermore, as $p \to \infty$, the steady-state variance admits the asymptotic expansion:
\begin{equation*}
H_2^2(\Gamma(R)) = \frac{1}{2p} + \frac{1}{2p^3} + \mathcal{O}\left(\frac{1}{p^5}\right).
\end{equation*}

\end{theorem}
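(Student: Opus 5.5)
The plan is to reduce the statement to the Kirchhoff index already computed in Theorem~\ref{thm:kirchhoff_index}, using the universal identity of Remark~\ref{rem:h2_universality}.

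\textbf{Closed form.} First I justify $H_2^2(\Gamma(R)) = \frac{1}{2n}\sum_{i=1}^{n-1}\mu_i^{-1}$ for the given SDE. The projected error $\delta(t)=x(t)-\overline{x}(t)\mathbf{1}_n$ obeys $d\delta = -L\delta\,dt + \Pi\,dW$ with $\Pi = I_n - \frac1n\mathbf{1}_n\mathbf{1}_n^T$. Since $\Gamma(R)$ is connected, $-L$ is Hurwitz on $\mathbf{1}_n^\perp$. The stationary covariance therefore solves $L\Sigma+\Sigma L=\Pi$, which gives $\Sigma=\tfrac12 L^\dagger$ and the stated trace formula. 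I then use the Laplacian spectrum of Theorem~\ref{completely_integral} directly, or equivalently $H_2^2=Kf(\Gamma(R))/(2n^2)$ with the closed form of $Kf(\Gamma(R))$ from Theorem~\ref{thm:kirchhoff_index}. Setting $n=p^3-1$, one factor $(p^3-1)$ cancels, and this yields
\[
H_2^2(\Gamma(R))=\frac{1}{2(p^3-1)}\,S(p),\qquad S(p)=p^2+1-\frac{p}{p^2-1}+\frac{p-1}{p^3-1}.
\]

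\textbf{Asymptotic expansion.} I expand each factor in powers of $1/p$:
\[
\frac{p}{p^2-1}=\frac1p+\frac1{p^3}+\frac1{p^5}+\mathcal{O}(p^{-7}),\qquad \frac{p-1}{p^3-1}=\frac{1}{p^2+p+1}=\frac1{p^2}-\frac1{p^3}+\mathcal{O}(p^{-5}).
\]
Note that the $p^{-4}$ coefficient of the second expansion vanishes. Hence
\[
S(p)=p^2+1-\frac1p+\frac1{p^2}-\frac2{p^3}+\mathcal{O}(p^{-4}),\qquad \frac{1}{2(p^3-1)}=\frac{1}{2p^3}\bigl(1+p^{-3}+\mathcal{O}(p^{-6})\bigr).
\]
Multiplying, the correction $p^{-3}\cdot p^2=p^{-1}$ exactly cancels the $-1/p$ term of $S(p)$. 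The bracket $S(p)\bigl(1+p^{-3}+\cdots\bigr)$ then equals $p^2+1+0\cdot p^{-1}+\mathcal{O}(p^{-2})$. Dividing by $2p^3$ gives $\frac{1}{2p}+\frac{1}{2p^3}+\mathcal{O}(p^{-5})$.

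\textbf{Main obstacle.} The only delicate step is tracking orders carefully enough to confirm that the $p^{-4}$ coefficient of $H_2^2$ is genuinely zero. This is what allows the error term $\mathcal{O}(p^{-5})$ rather than $\mathcal{O}(p^{-4})$. It differs from the $-2/p^4$ term appearing in $\overline{r}(\Gamma(R))$, because there the prefactor is $1/(p^3-2)$ instead of $1/(p^3-1)$. I would double-check this cancellation by writing $H_2^2$ as the exact rational function
\[
H_2^2(\Gamma(R))=\frac{S(p)(p^2-1)(p^3-1)}{2(p^3-1)^2(p^2-1)},
\]
and comparing its numerator with $\bigl(\tfrac{1}{2p}+\tfrac{1}{2p^3}\bigr)$ times its denominator.
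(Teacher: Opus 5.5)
Your proposal is correct and follows the paper's own route: the Lyapunov equation giving $\Sigma=\tfrac12 L^\dagger$, the identity $H_2^2=Kf/(2n^2)$, substitution of the Kirchhoff closed form, and a Laurent expansion. Your explicit check that the $p^{-4}$ terms cancel is a worthwhile addition, since the paper leaves that cancellation implicit.

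One aside on your contrast with $\overline{r}(\Gamma(R))$: you take the paper's $-2/p^4$ at face value, but it is wrong. Applying your own bookkeeping to $\frac{2}{p^3-2}=\frac{2}{p^3}\bigl(1+2p^{-3}+\cdots\bigr)$ gives $\overline{r}=\frac{2}{p}+\frac{2}{p^3}+\frac{2}{p^4}+\mathcal{O}(p^{-5})$. So your explanation, that the different prefactor changes the $p^{-4}$ coefficient, is right, but the coefficient you quote is not.
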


\begin{proof}
Let $\Pi = I_n - \frac{1}{n}\mathbf{1}_n\mathbf{1}_n^T$ denote the orthogonal projector onto the zero-sum subspace $\mathbf{1}_n^\perp$. The consensus error vector $\delta(t) = \Pi x(t) = x(t) - \overline{x}(t)\mathbf{1}_n$ evolves according to $\mathrm{d}\delta(t) = -L(\Gamma(R))\delta(t)\,\mathrm{d}t + \Pi\,\mathrm{d}W(t)$. Since $\Gamma(R)$ is undirected and connected, $L(\Gamma(R))$ is symmetric with positive eigenvalues on $\mathbf{1}_n^\perp$. The steady-state error covariance matrix $\Sigma = \lim_{t \to \infty} \mathbb{E}[\delta(t)\delta(t)^T]$ satisfies the continuous Lyapunov equation \cite{XiaoBoydKim2007}:
\begin{equation*}
L(\Gamma(R))\Sigma + \Sigma L(\Gamma(R)) = \Pi.
\end{equation*}
Because $L(\Gamma(R))$ is diagonalizable by an orthonormal eigenbasis $\{v_i\}_{i=0}^{n-1}$ with $L(\Gamma(R))v_0 = 0$ ($v_0 = \frac{1}{\sqrt{n}}\mathbf{1}_n$), the unique solution on $\operatorname{range}(\Pi)$ is $\Sigma = \frac{1}{2} L^\dagger(\Gamma(R))$, where $L^\dagger(\Gamma(R))$ is the Moore--Penrose pseudoinverse. The average steady-state variance, defined by the first-order network $H_2$-norm, evaluates to $H_2^2(\Gamma(R)) = \lim_{t \to \infty} \frac{1}{n}\mathbb{E}\left[\|\delta(t)\|^2\right] = \frac{1}{n} \operatorname{tr}(\Sigma) = \frac{1}{2n} \operatorname{tr}\left(L^\dagger(\Gamma(R))\right) = \frac{1}{2n}\sum_{i=1}^{n-1}\frac{1}{\mu_i}$.

By Klein and Randi\'{c}'s spectral formulation \cite{KleinRandic1993, XiaoGutman2003}, the Kirchhoff index satisfies $Kf(\Gamma(R)) = n \sum_{i=1}^{n-1} \frac{1}{\mu_i}$, establishing the exact identity $H_2^2(\Gamma(R)) = \frac{Kf(\Gamma(R))}{2n^2}$.

Substituting $n = p^3 - 1$ and the closed-form expression of $Kf(\Gamma(R))$ from Theorem~\ref{thm:kirchhoff_index} yields:
\begin{align*}
H_2^2(\Gamma(R)) &= \frac{1}{2(p^3 - 1)^2} \cdot (p^3 - 1)\left[ p^2 + 1 - \frac{p}{p^2 - 1} + \frac{p - 1}{p^3 - 1} \right] \\
&= \frac{1}{2(p^3 - 1)}\left[ p^2 + 1 - \frac{p}{p^2 - 1} + \frac{p - 1}{p^3 - 1} \right].
\end{align*}
Using the Laurent series expansions $\frac{1}{p^3 - 1} = \frac{1}{p^3} + \frac{1}{p^6} + \mathcal{O}(p^{-9})$ and $p^2 + 1 - \frac{p}{p^2 - 1} + \frac{p - 1}{p^3 - 1} = p^2 + 1 - \frac{1}{p} + \frac{1}{p^2} - \frac{2}{p^3} + \mathcal{O}(p^{-4})$ gives that
\begin{align*}
H_2^2(\Gamma(R)) &= \frac{1}{2}\left( \frac{1}{p^3} + \frac{1}{p^6} + \mathcal{O}\left(p^{-9}\right) \right)\left( p^2 + 1 - \frac{1}{p} + \mathcal{O}\left(p^{-2}\right) \right)\\
&= \frac{1}{2p} + \frac{1}{2p^3} + \mathcal{O}\left(\frac{1}{p^5}\right),
\end{align*}
which completes the proof.
\end{proof}

This result guarantees that as $p \to \infty$, noise variance decays as $\mathcal{O}(p^{-1})$ through core-mediated low-pass filtering


We now examine the discrete-time random walk on $\Gamma(R)$ governed by $P = D^{-1}(\Gamma(R))A(\Gamma(R))$, with stationary distribution $\pi_v = \frac{d(v)}{2|E(\Gamma(R))|}$.

\begin{theorem}
Let $\Pi_k = \sum_{v \in V_k} \pi_v$ denote the total stationary traffic probability concentrated within each partition cell $V_k$ ($k \in \{1, 2, 3, 4\}$). Then
\begin{align*}
\Pi_1 &= \frac{p^2(p - 1)^2}{3p^4 - 4p^3 - p^2 + 2}, \;
\Pi_2 = \frac{(p - 1)(p^2 - 2)}{3p^4 - 4p^3 - p^2 + 2}, 
\Pi_3 = \frac{(p - 1)(p^3 - 2)}{3p^4 - 4p^3 - p^2 + 2}, \\
\Pi_4 &= \frac{(p - 1)^2(p^2 - 2)}{3p^4 - 4p^3 - p^2 + 2}.
\end{align*}
Moreover, the asymptotic distribution vector satisfies:
\begin{equation*}
\lim_{p \to \infty}\Pi_1 = \frac{1}{3}, \quad \lim_{p \to \infty}\Pi_2 = 0, \quad \lim_{p \to \infty}\Pi_3 = \frac{1}{3}, \quad \lim_{p \to \infty}\Pi_4 = \frac{1}{3}.
\end{equation*}
\end{theorem}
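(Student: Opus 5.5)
The plan is a direct computation. We use the stationary law $\pi_v = d(v)/(2|E(\Gamma(R))|)$ together with the fact that the degree is constant on each cell $V_k$. Lemma~\ref{lem:equitable_verification} gives the cell degrees as row sums of $Q_A$. This yields $d = p-1$ on $V_1$, $d = p^3-2$ on $V_3$, and $d = p^2-2$ on both $V_2$ and $V_4$, matching the tier degrees $d_1,d_3,d_2$ of the structural classification.

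We then have $\Pi_k = |V_k|\, d_{V_k}/(2|E|)$ with $2|E| = 3p^4-4p^3-p^2+2$. Substituting $|V_1| = p^2(p-1)$, $|V_2| = |V_3| = p-1$ and $|V_4| = (p-1)^2$ gives the four stated fractions immediately:
\begin{itemize}
\item[] $\Pi_1$ has numerator $p^2(p-1)(p-1) = p^2(p-1)^2$;
\item[] $\Pi_2$ has numerator $(p-1)(p^2-2)$;
\item[] $\Pi_3$ has numerator $(p-1)(p^3-2)$;
\item[] $\Pi_4$ has numerator $(p-1)^2(p^2-2)$.
\end{itemize}
As a sanity check, the numerators should sum to $2|E|$. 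This is equivalent to the edge count already recorded, since $\sum_v d(v) = 2|E_1|+2|E_2|+2|E_3|+2|E_4|$. I would verify it by expanding: $(p^4-2p^3+p^2) + (p^3-p^2-2p+2) + (p^4-p^3-2p+2) + (p^4-2p^3-p^2+4p-2) = 3p^4-4p^3-p^2+2$.

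For the limits, compare leading terms with the denominator $3p^4(1+o(1))$:
\begin{itemize}
\item[] $\Pi_1 \sim p^4/3p^4 = 1/3$;
\item[] $\Pi_3 \sim p^4/3p^4 = 1/3$;
\item[] $\Pi_4 \sim p^4/3p^4 = 1/3$;
\item[] $\Pi_2$ has a cubic numerator, so $\Pi_2 = O(1/p) \to 0$.
\end{itemize}
These all follow by dividing numerator and denominator by $p^4$.

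There is no real obstacle here. The only point needing care is confirming that $V_2$ and $V_4$ share the degree $p^2-2$, that is, $q_{22}+q_{23}+q_{24} = q_{42}+q_{43}+q_{44}$. This is immediate from the rows of $Q_A$: $(p-2)+(p-1)+(p-1)^2 = p^2-2$ and $(p-1)+(p-1)+(p^2-2p) = p^2-2$.
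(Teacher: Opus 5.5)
Your proposal is correct and follows essentially the same route as the paper: both compute $\Pi_k = |V_k|\,d_{V_k}/(2|E(\Gamma(R))|)$ from the constant cell degrees, check that $\sum_k |V_k|\,d_{V_k} = 3p^4-4p^3-p^2+2 = 2|E(\Gamma(R))|$, and read off the limits from the leading terms. The only cosmetic difference is that you obtain the cell degrees as row sums of $Q_A$ rather than quoting the tier degrees $d_1, d_2, d_3$ from the structural classification, which amounts to the same thing.
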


\begin{proof}
From Section~\ref{Sec2_Preliminaries} and Section~\ref{Sec3}, the vertex set partitions into four cells $V_1, V_2, V_3, V_4$ with cardinalities $n_1 = p^2(p - 1)$, $n_2 = p - 1$, $n_3 = p - 1$, $n_4 = (p - 1)^2$ and degrees $d_1 = p - 1$, $d_2 = d_4 = p^2 - 2$, $d_3 = p^3 - 2$. The stationary probability mass in cell $V_k$ is $\Pi_k = \frac{n_k d_k}{2|E(\Gamma(R))|}$. Now $\sum_{k=1}^4 n_k d_k = 3p^4 - 4p^3 - p^2 + 2 = 2|E(\Gamma(R))|$. Dividing each $n_k d_k$ by $2|E(\Gamma(R))|$ yields $\Pi_1, \Pi_2, \Pi_3, \Pi_4$.
\end{proof}

\begin{example}
Consider the distribution of packet-switched routing loads across the structural partitions of $\Gamma(R)$ as $p \to \infty$:
\begin{itemize}
    \item Although cell $V_3$ comprises only $n_3 = p - 1$ vertices (representing a vanishing fraction $\frac{p - 1}{p^3 - 1} \sim \frac{1}{p^2} \to 0$ of all nodes), it absorbs one-third of the global stationary traffic mass ($\Pi_3 \to \frac{1}{3}$). The individual stationary load on any core node is:
    \begin{equation*}
    \pi_{\mathrm{core}} = \frac{\Pi_3}{n_3} = \frac{p^3 - 2}{2|E(\Gamma(R))|} \sim \frac{1}{3p}.
    \end{equation*}
    \item  In contrast, the $n_1 = p^2(p - 1)$ peripheral nodes in $V_1$ each experience an individual stationary load of:
    \begin{equation*}
    \pi_{\mathrm{periph}} = \frac{\Pi_1}{n_1} = \frac{p - 1}{2|E(\Gamma(R))|} \sim \frac{1}{3p^3}.
    \end{equation*}
    
\end{itemize}
The resulting per-node traffic intensity ratio satisfies $\frac{\pi_{\mathrm{core}}}{\pi_{\mathrm{periph}}} = \frac{p^3 - 2}{p - 1} = p^2 + p + 1 - \frac{1}{p - 1} \sim p^2 + p + 1$.

This quadratic stress ratio $\Theta(p^2)$ identifies the dense core $V_3$ as the primary routing bottleneck of $\Gamma(R)$, indicating that buffer capacities at core nodes must scale quadratically relative to peripheral buffers to avoid packet overflow.
\end{example}

\section*{Conclusion}

In this paper, we resolved the complete spectral eigenspaces and network systems dynamics of the zero-divisor graph $\Gamma(R)$ over the finite local ring $R = \mathbb{F}_p[x]/\langle x^4 \rangle$. Using an equitable four-cell partition, we proved that while the adjacency spectrum reduces to an irreducible cubic polynomial, the Laplacian spectrum is entirely integral. These spectral foundations established exact closed-form laws for structural reliability, consensus convergence rates, and stochastic noise dissipation. We showed that the average resistance distance vanishes asymptotically ($\overline{r} \sim \frac{2}{p} \to 0$), identifying the algebraic core as an ultra-low-impedance crossbar, whereas stationary random-walk traffic induces a quadratic load concentration $\Theta(p^2)$ on core routing hubs. Future work will extend this spectral and dynamical framework to multivariate non-chain rings, particularly $R = \mathbb{F}_p[x, y]/\langle x^2, y^2 \rangle$. Investigating such structures will establish how branching annihilator ideals distribute high-stress traffic across dual multi-path backbones, overcoming the single-core $\Theta(p^2)$ congestion bottleneck of chain rings while preserving vanishing electrical resistance and fast consensus convergence.

\section*{Declarations}
 All benchmark tables, polynomial factorizations, and spectral computations were evaluated and verified using Python  against the derived analytical expressions. Generative AI was used strictly for language polishing and grammatical refinement in portions of the text; all mathematical proofs and results remain the sole work and responsibility of the authors.

\newpage

\section*{Appendix-A: Numerical Synthesis and Benchmark Comparisons}\label{app:benchmarks}

To illustrate the concrete application of these closed-form network invariants, Table~\ref{tab:network_benchmarks} evaluates the complete performance suite across small odd prime fields $\mathbb{F}_3$, $\mathbb{F}_5$, $\mathbb{F}_7$, and $\mathbb{F}_{11}$.

\begin{table}[htbp]
\centering
\small
\caption{Comprehensive Network Systems Invariants of $\Gamma(R)$ for Benchmark Primes}
\label{tab:network_benchmarks}
\vspace{1ex}
\begin{tabular*}{\linewidth}{@{\extracolsep{\fill}}lcccc}
\toprule
\textbf{Performance Metric / Invariant} & $\mathbf{p = 3}$ & $\mathbf{p = 5}$ & $\mathbf{p = 7}$ & $\mathbf{p = 11}$ \\
\midrule
\multicolumn{5}{l}{\textit{Topological and Order Scale}} \\
Total Processing Nodes ($n = p^3 - 1$) & $26$ & $124$ & $342$ & $1330$ \\
Total Communication Links ($|E|$) & $64$ & $676$ & $2892$ & $19240$ \\
Core Sub-Network Order ($n_3 = p - 1$) & $2$ & $4$ & $6$ & $10$ \\
Peripheral Buffer Nodes ($n_1 = p^2(p - 1)$) & $18$ & $100$ & $294$ & $1210$ \\
\addlinespace
\midrule
\multicolumn{5}{l}{\textit{Reliability and Fault Tolerance}} \\
Total Spanning Tree Count ($\tau(\Gamma(R))$) & $2.23 \times 10^{11}$ & $5.13 \times 10^{92}$ & $2.38 \times 10^{310}$ & $5.57 \times 10^{1464}$ \\
Spanning Tree Entropy ($z(\Gamma(R))$) & $1.0051$ & $1.7246$ & $2.0900$ & $2.5385$ \\
Asymptotic Ratio $z(\Gamma(R))/\ln p$ & $0.9149$ & $1.0715$ & $1.0740$ & $1.0586$ \\
Edge Failure Immunity Exponent ($d_1 = p - 1$) & $2$ & $4$ & $6$ & $10$ \\
\addlinespace
\midrule
\multicolumn{5}{l}{\textit{Transport and Energy Dissipation}} \\
Kirchhoff Index ($Kf(\Gamma(R))$) & $252.25$ & $3202.17$ & $17056.12$ & $162148.08$ \\
Average Resistance Distance ($\overline{r}$) & $0.7762$ & $0.4199$ & $0.2925$ & $0.1835$ \\
Leading Term Approximation ($2/p$) & $0.6667$ & $0.4000$ & $0.2857$ & $0.1818$ \\
Laplacian Energy ($LE(\Gamma(R))$) & $115.08$ & $1402.45$ & $6450.24$ & $45874.11$ \\
\addlinespace
\midrule
\multicolumn{5}{l}{\textit{Consensus Dynamics and Robustness}} \\
Algebraic Connectivity ($\alpha = \mu_1$) & $2$ & $4$ & $6$ & $10$ \\
Spectral Gap to Intermediate Tier ($\mu_2 - \mu_1$) & $6$ & $20$ & $42$ & $110$ \\
Laplacian Condition Number ($\kappa = \mu_3/\mu_1$) & $13.00$ & $31.00$ & $57.00$ & $133.00$ \\
Core Relaxation Time ($T_3 = \mu_3^{-1}$) & $0.0385\,\mathrm{s}$ & $0.0081\,\mathrm{s}$ & $0.0029\,\mathrm{s}$ & $0.0008\,\mathrm{s}$ \\
Peripheral Relaxation Time ($T_1 = \mu_1^{-1}$) & $0.5000\,\mathrm{s}$ & $0.2500\,\mathrm{s}$ & $0.1667\,\mathrm{s}$ & $0.1000\,\mathrm{s}$ \\
Network $H_2$ Noise Variance ($H_2^2$) & $0.1912$ & $0.1037$ & $0.0729$ & $0.0457$ \\
\addlinespace
\midrule
\multicolumn{5}{l}{\textit{Random-Walk Stationary Distribution Mass}} \\
Peripheral Traffic Mass ($\Pi_1$) & $0.2813$ & $0.2959$ & $0.3050$ & $0.3144$ \\
Intermediate Sub-Clique Mass ($\Pi_4$) & $0.2188$ & $0.2722$ & $0.2925$ & $0.3092$ \\
Core Hub Traffic Mass ($\Pi_3$) & $0.3906$ & $0.3639$ & $0.3537$ & $0.3454$ \\
Transitional Block Mass ($\Pi_2$) & $0.1094$ & $0.0680$ & $0.0488$ & $0.0309$ \\
Individual Node Stress Ratio ($\pi_{\mathrm{core}}/\pi_{\mathrm{periph}}$) & $12.50$ & $30.75$ & $56.83$ & $132.90$ \\
\bottomrule
\end{tabular*}
\end{table}

\newpage
\section*{Appendix-B: }
\section*{1: Engineering Discussion and Architectural Guidelines}\label{app:guidelines}

The quantitative performance results established across Section~\ref{Sec4} and synthesized in Table~\ref{tab:network_benchmarks} provide concrete architectural principles for deploying distributed systems over zero-divisor topologies:

\begin{enumerate}
    \item \textbf{High-Throughput Backplane Design via the Core Tier:} The vanishing average resistance distance $\overline{r} \sim \frac{2}{p}$ demonstrates that the algebraic ideal $u^3 R \setminus \{0\} = V_3$ should be implemented using ultra-high-bandwidth routing switches. Placing high-frequency data aggregators or database servers in $V_3$ guarantees that all peripheral devices in $V_1$ communicate with minimum latency and minimal resistive power dissipation.
    
    \item \textbf{Decoupled Multi-Scale Consensus Schedules:} Because the Laplacian condition number $\kappa(L) = p^2 + p + 1$ grows quadratically, attempting to synchronize all $n$ nodes uniformly with a single high-gain controller risks local over-saturation in the core. Engineers should employ dual-rate control: high-frequency sampling ($\sim \mu_3$) among core nodes $V_3$ to maintain coordination, coupled with lower-frequency estimators ($\sim \mu_1$) at the periphery $V_1$ to guarantee stability without actuator chattering.
    
    \item \textbf{Buffer Allocation for Asymmetric Packet Congestion:} The stationary traffic probability shows that the $p - 1$ core nodes handle more than one-third ($> 33.33\%$) of all random-walk traffic regardless of graph scale, resulting in a per-node packet load that is asymptotically $(p^2 + p + 1)$ times larger than that of peripheral nodes. Consequently, memory queues allocated to core nodes must scale quadratically relative to peripheral edge buffers to prevent packet dropping under stochastic routing.
    
\end{enumerate}

\section*{2: Figure of classification of $\Gamma(R)$}

\begin{figure}[htbp]
\centering
\begin{tikzpicture}[
    cell/.style={ellipse, draw=#1!80!black, fill=#1!15, line width=1.2pt, align=center, inner sep=2pt},
    edge/.style={draw=gray!80, line width=1.5pt}
]

\node[cell=orange, minimum width=3.4cm, minimum height=2.0cm] (V3) at (0, 0) {
    \textbf{\large $V_3$}\\[1pt]
    \small ($S_{u^3}$)\\[1pt]
    \footnotesize Order: $p-1$
};

\node[cell=teal, minimum width=3.4cm, minimum height=2.0cm] (V2) at (-4.2, 1.2) {
    \textbf{\large $V_2$}\\[1pt]
    \small ($S_{u^2}$)\\[1pt]
    \footnotesize Order: $p-1$
};

\node[cell=teal, minimum width=3.4cm, minimum height=2.0cm] (V4) at (4.2, 1.2) {
    \textbf{\large $V_4$}\\[1pt]
    \small ($S_{u^2+u^3}$)\\[1pt]
    \footnotesize Order: $(p-1)^2$
};

\node[cell=blue, minimum width=6.6cm, minimum height=2.0cm] (V1) at (0, -3.6) {
    \textbf{\large $V_1$}\\[1pt]
    \footnotesize Order: $p^2(p-1)$
};

\draw[edge] (V2) -- (V3);
\draw[edge] (V4) -- (V3);
\draw[edge] (V2) -- (V4);
\draw[edge] (V3) -- (V1);

\node[above=0.35cm of V3, text=teal!70!black, align=center] {
    \textbf{Intermediate Tier} $\mathcal{D}_2 = V_2 \cup V_4$\\[1pt]
    \footnotesize \textit{Induces complete subgraph $K_{p(p-1)}$}
};

\node[below=0.15cm of V3, text=orange!80!black] {
    \footnotesize \textbf{Core Tier} $\mathcal{D}_3 = V_3$ (\textit{Universal Dominating Set})
};

\node[below=0.25cm of V1, text=blue!80!black] {
    \footnotesize \textbf{Peripheral Tier} $\mathcal{D}_1 = V_1$ (\textit{Independent Set})
};

\end{tikzpicture}
\caption{Equitable 4-cell partition $\mathcal{P}=\{V_1, V_2, V_3, V_4\}$ and degree tiers of $\Gamma(R)$ for $R=\mathbb{F}_p[x]/\langle x^4\rangle$.}
\label{fig:equitable_partition}
\end{figure}

\end{document}